\numberwithin{equation}{section}
\numberwithin{figure}{section}
\theoremstyle{plain}
\newtheorem{thm}{\protect\theoremname}
  \theoremstyle{definition}
  \newtheorem{defn}[thm]{\protect\definitionname}
  \theoremstyle{remark}
  \newtheorem*{rem*}{\protect\remarkname}
  \theoremstyle{definition}
  \newtheorem{example}[thm]{\protect\examplename}
  \theoremstyle{plain}
  \newtheorem{lem}[thm]{\protect\lemmaname}
  \providecommand{\definitionname}{Definition}
  \providecommand{\examplename}{Example}
  \providecommand{\lemmaname}{Lemma}
  \providecommand{\remarkname}{Remark}
\providecommand{\theoremname}{Theorem}
\begin{document}
\begin{center}
{\Large ENGG4920CP Thesis II}
\par\end{center}{\Large \par}

\begin{center}
{\Large \vspace{1.5cm}
}
\par\end{center}{\Large \par}

\begin{center}
{\Huge O}{\huge N }{\Huge M}{\huge ULTI-RATE }{\Huge S}{\huge EQUENTIAL
}{\Huge D}{\huge ATA }{\Huge T}{\huge RANSMISSION}
\par\end{center}{\huge \par}

\begin{center}
{\Large \vspace{1.5cm}
}
\par\end{center}{\Large \par}

\begin{center}
{\Large BY}
\par\end{center}{\Large \par}

\begin{center}
{\Large \vspace{0.5cm}
}
\par\end{center}{\Large \par}

\begin{center}
{\LARGE Cheuk Ting LI}
\par\end{center}{\LARGE \par}

\begin{center}
{\Large \vspace{1.5cm}
}
\par\end{center}{\Large \par}

\begin{center}
{\Large A FINAL YEAR PROJECT REPORT SUBMITTED IN PARTIAL FULFILLMENT
OF THE REQUIREMENTS FOR THE DEGREE OF BACHELOR OF INFORMATION ENGINEERING
DEPARTMENT OF INFORMATION ENGINEERING THE CHINESE UNIVERSITY OF HONG
KONG}
\par\end{center}{\Large \par}

\begin{center}
{\Large \vspace{0.5cm}
}
\par\end{center}{\Large \par}

\begin{center}
{\Large May, 2012}
\par\end{center}{\Large \par}

\begin{center}
\vspace{5cm}

\par\end{center}

\pagebreak

\title{On Multi-rate Sequential Data Transmission}

\author{Cheuk Ting Li}
\begin{abstract}
In this report, we investigate the data transmission model in which
a sequence of data is broadcasted to a number of receivers. The receivers,
which have different channel capacities, wish to decode the data sequentially
at different rates. Our results are applicable to a wide range of
scenarios. For instance, it can be employed in the broadcast streaming
of a video clip through the internet, so that receivers with different
bandwidths can play the video at different speed. Receivers with greater
bandwidths can provide a smooth playback, while receivers with smaller
bandwidths can play the video at a slower speed, or with short pauses
or rebuffering.
\end{abstract}
\maketitle

\section{Introduction\label{sec:Introduction}}

Consider the scenario in which a long video clip has to be transmitted
to a number of receivers having different packet loss ratios. One
approach is to divide the video data into blocks of $K$ packets,
encode each block into $L\ge K$ encoding packets, and then transmit
the blocks to the receiver sequentially. Using random linear projections
or any capacity-achieving erasure code, the receiver can decode the
block if about $K$ out of $L$ packets are received. This method,
which we call a blockwise code, can only cater for the need of the
receiver with packet loss probability less than $1-K/L$.

To suit the need of different receivers, we can perform time multiplexing
on two blockwise codes at different rates. Cosider Blockwise code
1 and Blockwise code 2, which use random linear projections to encode
each block of $K$ packets into $L_{1}$ and $L_{2}$ packets respectively
($L_{1}<L_{2}$). Denote the $i$-th packet generated using Blockwise
code $k$ by $P_{k,i}$. We transmit the packets of the two codes
in an interleaved manner (in the sequence $P_{1,1},P_{2,1},P_{1,2},P_{2,2},P_{1,3},...$).
Receiver 1, which uses only the packets generated using Blockwise
code 1, can decode a block using $K$ out of the $L_{1}$ packets
encoded from the block, and therefore can tolerate a packet loss probability
$1-K/L_{1}$. As Blockwise code 1 transmits a block of $K$ packet
per $L_{1}$ channel uses, taking interleaving into account, Receiver
1 can decode at a rate of $K/(2L_{1})$ packets per channel use. Receiver
2 uses packets generated by both codes. It can decode a block using
$K$ out of the $L_{1}+L_{2}$ packets encoded from the block, and
allows a higher packet loss probability $1-K/(L_{1}+L_{2})$. However,
to use the packets generated by both codes, Receiver 2 has to wait
for the slower Blockwise code 2, which transmits a block of $K$ packet
per $L_{2}$ channel uses. Receiver 2 can decode at a rate $K/(2L_{2})$.

\begin{figure}
\centering{}\includegraphics{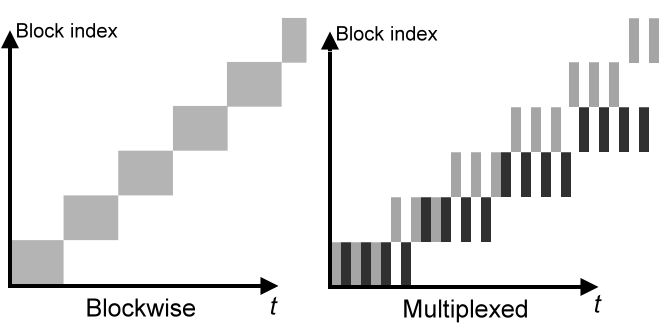}\caption{\label{Figure_block_and_mul_illus}Encoding process of blockwise and
multiplexed codes}
\end{figure}

In the scenario, the receivers with different channel conditions wish
to decode the same sequence of data. Each receiver will decode the
data sequentially at a roughly constant rate which depends on the
channel condition. We call these settings as \emph{multi-rate sequential
data transmission}. It can be viewed as multilevel diversity coding
\cite{yeung1999} with an additional sequential decoding constraint.
In the following sections, we will discuss various cases of the problem.
The case of one transmitter is described in Section \ref{sec:1D}.
The main contribution lies in Section \ref{sec:Tools} concerning
multiple transmitters, in which multi-rate sequential data transmission
has apparently dissimilar behavior compared to its non-sequential
counterpart.

\section{Formulation of Multi-rate Sequential Data Transmission\label{sec:MRSD_basic}}

We consider the transmission of a sequence of data through an erasure
channel. To simplify the setting, we assume that one bit is sent at
a time through the erasure channel. Assume the sender is going to
transmit a sequence of bits $M_{1},M_{2},...$. We assume $M_{i}\stackrel{i.i.d.}{\sim}\mathrm{Bern}(1/2)$.
The sender will encode it into a sequence of bits $X_{1},X_{2},...$
and transmit them through an erasure channel. The symbols arrive at
the receiver as $Y_{1},Y_{2},...$, where some of them may be erased
(denoted by $e$, $Y_{i}\in\{0,1,e\}$). Based on these symbols, the
receiver tries to decode $M_{1},M_{2},...$.
\begin{defn}
[MRS code] A \emph{multi-rate sequential} \emph{code} (MRS code)
is specified by a pair of encoding and decoding functions. The encoding
function is a function mapping the message $\left\{ M_{i}\right\} $
to the encoding symbols $\left\{ X_{j}\right\} $. There is a random
variable $Q$ supported in $\mathcal{Q}$ which is known by both the
sender and the receiver, and independent of the message and the channel
erasure, to allow random coding scheme. The encoding function is a
function 
\begin{eqnarray*}
\mathrm{Enc}\,:\,\mathcal{Q}\times\{0,1\}^{\mathbb{N}}\times\mathbb{N} & \rightarrow & \{0,1\}\\
\left(Q,\left\{ M_{i}\right\} ,j\right) & \mapsto & X_{j}.
\end{eqnarray*}

Note that this definition allows the encoder to look at all blocks.

The decoding function maps the received symbols $\left\{ Y_{j}\right\} $,
where some of them may be erased, to the recovered blocks $\left\{ \widetilde{M}_{i}\right\} $.
Let $\mathcal{Y}=\bigcup_{n\in\mathbb{N}}\{0,1,e\}^{n}$ be the space
of received symbols. The decoding function is a function
\begin{eqnarray*}
\mathrm{Dec}\,:\,\mathcal{Q}\times\mathcal{Y}\times\mathbb{N} & \rightarrow & \{0,1\}\\
\left(Q,Y_{1}^{n},i\right) & \mapsto & \widetilde{M}_{i}.
\end{eqnarray*}

We use the notation $Y_{a}^{b}=(Y_{a},Y_{a+1},...,Y_{b})$. For simplicity,
we write $X_{j}\left(Q,\left\{ M_{i}\right\} \right))=\mathrm{Enc}\left(Q,\left\{ M_{i}\right\} ,j\right)$,
and $\widetilde{M}_{i}\left(Q,Y_{1}^{n}\right)=\mathrm{Dec}\left(Q,Y_{1}^{n},i\right)$.
\end{defn}

The MRS code does not admit a fixed rate like other block codes. Instead
its rate depends on the channel capacity.
\begin{defn}
[admissible pair] A rate-capacity pair $(r,c)$ is called \emph{$\epsilon$-admissible}
by a code if there exist $N_{0}$ such that when $X_{i}\to Y_{i}$
is an erasure channel with capacity $c$ (i.e. erasure channel with
erasure probability $1-c$), 
\[
\mathbb{P}\left\{ M_{m}\neq\widetilde{M}_{m}\left(Q,Y_{1}^{N}\right)\right\} \le\epsilon\:\text{ for any }N\ge N_{0}\text{ and }m\le N(r-\epsilon).
\]

In other words, any receiver with channel capacity $c$ can decode
the first $N(r-\epsilon)$ bits $M_{1}^{N(r-\epsilon)}$ with bit
error probability less than $\epsilon$ when the first $N$ symbols
$Y_{1}^{N}$ are received, for sufficiently large $N$.
\end{defn}

It is clear that if $(c,r)$ is \emph{$\epsilon$}-admissible, then
all pairs in $\left\{ (c',r')|c'\ge c,r'\le r\right\} $ are \emph{$\epsilon$}-admissible.
Therefore we can use a function to characterize the rate of a code.
We call $r:[0,1]\to[0,\infty)$ a \emph{rate-capacity function} if
it is monotonically increasing, right continuous, and there exists
an $\eta>0$ such that $r(c)=0$ for $c\le\eta$.
\begin{defn}
[rate of MRS code] \label{def_capratefunc}A rate-capacity function
$r(c)$ is called \emph{$\epsilon$-admissible} by a code if all of
the pairs $(c,r(c))$ are \emph{$\epsilon$}-admissible by the code.
\end{defn}

The rate-capacity functions of the blockwise code and the multiplexed
blockwise code described in the introduction can be given by Figure
\ref{Figure_blockmul}.

\begin{figure}
\centering{}\includegraphics[scale=1.5]{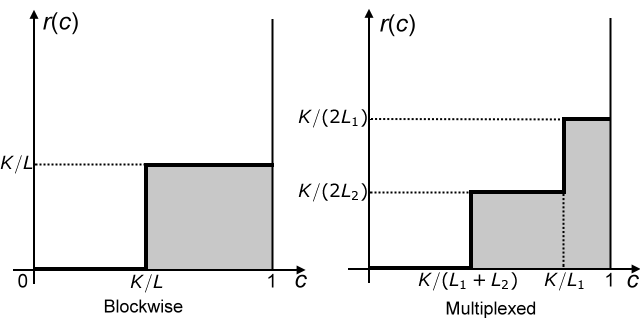}\caption{\label{Figure_blockmul}Rate-capacity functions of blockwise and multiplexed
blockwise code}
\end{figure}

\begin{defn}
[achievable rate-capacity functions] A rate-capacity function $r(c)$
is \emph{achievable} if for any $\epsilon>0$, there exist a code
where $r(c)$ is \emph{$\epsilon-$}admissible by that code.
\end{defn}

\section{Superposition Coding for Single Transmitter\label{sec:1D}}

In this section, we will present the design of superposition multi-rate
sequential codes, and prove their optimality by giving the set of
achievable rate-capacity functions explicitly.
\begin{defn}
[superposition MRS code] A superposition MRS code is characterized
by the block size $K$ and the parameter $g:[0,\infty)\to[0,\infty)$
which is bounded, monotonically decreasing and left continuous with
$\int_{0}^{\infty}g(\alpha)d\alpha=1$. The message $\left\{ M_{i}\right\} $
is divided into blocks of $K$ bits, $B_{i}=M_{(i-1)K+1}^{iK}$. In
the encoding process of the code, we first sample a sequence of non-negative
random variables $A_{1},A_{2},...$ i.i.d. according to the cumulative
distribution function

\begin{eqnarray}
F_{A}(\alpha) & = & \int_{0}^{\alpha}g(x)dx-\alpha g(\alpha)\nonumber \\
 & = & -\int_{0}^{\alpha}xdg(x).\label{eq:FA_def}
\end{eqnarray}
Note that $F_{A}(\alpha)$ is increasing as $g(\alpha)$ is decreasing.
The sequence is known by both the sender and the receiver (we may
let $Q=(A_{1},A_{2},...)$). At time instance $i$, the sender generate
an encoding symbol $X_{i}$ from the block $B_{\left\lceil i\cdot A_{i}/K\right\rceil }$
using random linear projections and transmit it to the receiver.
\end{defn}
Note that we employ a random coding scheme. Random linear projections
allow us to decode a block of $K$ bits using $K+o(K)$ encoding bits
with an arbitrarily small error probability. The superposition MRS
code is essentially performing time multiplex on blockwise codes at
different rates.

The parameter $g(\alpha)$ roughly corresponds to the proportion of
the encoding symbols dedicated to satisfying the need of the receivers
which wish to decode at rate $\alpha$. When we send encoding symbols
encoded at rate $\gamma$ (i.e. $A_{i}=\gamma$), a receiver which
decodes at rate $\alpha\le\gamma$ can use a portion of $\alpha/\gamma$
of the encoding symbols. The parameter $g(\alpha)$ describes the
proportion of the symbols which can be used, divided by the proportion
needed (which is $\alpha$), given by
\[
g(\alpha)=\frac{1}{\alpha}\int_{\alpha}^{\infty}\frac{\alpha}{\gamma}dF_{A}(\gamma),
\]
which can be verified using (\ref{eq:FA_def}):
\begin{eqnarray*}
\frac{1}{\alpha}\int_{\alpha}^{\infty}\frac{\alpha}{\gamma}dF_{A}(\gamma) & = & \int_{\alpha}^{\infty}\frac{1}{\gamma}dF_{A}(\gamma)\\
 & = & -\int_{\alpha}^{\infty}\frac{1}{\gamma}\cdot\gamma dg(\gamma)\\
 & = & -\int_{\alpha}^{\infty}dg(\gamma)=g(\alpha).
\end{eqnarray*}

The blockwise code and the multiplexed blockwise code described in
the introduction are examples of superposition MRS codes. For the
blockwise code, the parameter is taken to be 
\[
g(\alpha)=\begin{cases}
L/K & \text{ when }\alpha\le K/L\\
0 & \text{ when }\alpha>K/L.
\end{cases}
\]

For the multiplexed blockwise code, the parameter is taken to be 
\[
g(\alpha)=\begin{cases}
\frac{L_{1}+L_{2}}{K} & \text{ when }\alpha\le\frac{K}{2L_{2}}\\
\frac{L_{1}}{K} & \text{ when }\frac{K}{2L_{2}}<\alpha\le\frac{K}{2L_{1}}\\
0 & \text{ when }\alpha>\frac{K}{2L_{1}}.
\end{cases}
\]

The choice of the parameter is closely related to the rate-capacity
function we would like to achieve. The following theorem describes
the relationship between the two functions.
\begin{thm}
\label{thm:1Dsupregion}For a fixed $\epsilon$, the rate-capacity
function $r(c)$ is $\epsilon$-admissible by the superposition MRS
code with block size $K$ and parameter $g(\alpha)$ for all sufficiently
large $K$, if there exists $\xi>0$ satisfying
\[
c\cdot g\left(r(c)\right)\ge1+\xi\text{ for all }c>0\text{ with }r(c)>0.
\]
\end{thm}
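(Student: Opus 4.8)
The plan is to show that under the condition $c\cdot g(r(c))\ge 1+\xi$, a receiver with capacity $c$ can sequentially recover the first $N(r(c)-\epsilon)$ message bits with small error probability. First I would fix a capacity $c>0$ with $r(c)>0$ and set $\alpha=r(c)$. The key quantity is the number of encoding symbols, among the first $N$ sent, that are devoted to a block that the receiver needs, i.e.\ a block $B_j$ with $j\le N\alpha/K$ roughly; equivalently, the symbols $X_i$ with $i\le N$ and $\lceil i A_i/K\rceil \le N\alpha/K$. For each such symbol the receiver also has to actually see it through the erasure channel, which happens with probability $c$ independently. The combinatorial heart is: for a block $B_j$, how many indices $i\le N$ satisfy $\lceil iA_i/K\rceil = j$? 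Since $A_i$ is i.i.d.\ with distribution $F_A$, the symbol $X_i$ is assigned to block number $\lceil iA_i/K\rceil$, and as $i$ ranges over $\{1,\dots,N\}$ the expected number of symbols assigned to blocks $1,\dots,\lfloor N\alpha/K\rfloor$ — more precisely, assigned using rate $A_i\ge\alpha$ and landing in an early enough block — should be governed by the identity $g(\alpha)=\int_\alpha^\infty \frac1\gamma\,dF_A(\gamma)$ derived in the text.

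The main steps, in order, are: (1) For a target block index $j$ with $j\le N(\alpha-\epsilon)/K$, write down the set $S_j=\{i\le N:\lceil iA_i/K\rceil\le j\}$ of symbols usable for decoding $B_1,\dots,B_j$, and compute $\mathbb{E}|S_j|$ by summing over $i$ the probability $\mathbb{P}\{A_i\le jK/(iK)\}=F_A(j/i)$ — actually $\mathbb{P}\{\lceil iA_i/K\rceil\le j\}$. Using a continuum approximation (replace the sum over $i\le N$ by an integral, legitimate since $A_i$ has a density away from atoms of $g$) and the change of variables identity, one gets $\mathbb{E}|S_j|\approx Kc^{-1}\cdot(\text{something})$; the precise bookkeeping should yield that the number of \emph{received} symbols usable for block $j$ is at least $c\cdot g(\alpha)\cdot(jK)\cdot(1+o(1))\ge (1+\xi)(jK)(1+o(1))$, comfortably more than the $jK(1+o(1))$ random linear equations needed to solve for $B_1,\dots,B_j$ jointly. (2) Concentration: each $X_i$ is assigned independently (the $A_i$ are i.i.d.) and erased independently, so $|S_j\cap\{\text{received}\}|$ is a sum of independent Bernoulli variables; apply a Chernoff/Hoeffding bound to show it exceeds $jK(1+\xi/2)$ with probability $\ge 1-\epsilon/2$ uniformly in $j\le N(\alpha-\epsilon)/K$, for $N$ large (a union bound over at most $N$ values of $j$ is absorbed by the exponential decay). (3) Decoding: given at least $jK(1+\xi/2)$ received random linear projections of the $jK$ bits $B_1^j$, the random linear code decodes $B_1^j$ — hence in particular $B_1,\dots,B_j$ and thus $M_1,\dots,M_{jK}$ — with error probability $\le \epsilon/2$ for $K$ large (this is the standard MDS-like property of random linear projections cited after the definition). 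Taking $K$ large enough and then $N_0$ large enough makes the total error $\le\epsilon$ for all $m\le N(\alpha-\epsilon)$, establishing $\epsilon$-admissibility of the pair $(c,\alpha)=(c,r(c))$, and since this holds for every such $c$, the rate-capacity function $r$ is $\epsilon$-admissible.

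The hard part, I expect, will be step (1): making the continuum/integral approximation of $\mathbb{E}|S_j|$ rigorous and extracting exactly the factor $c\cdot g(r(c))$. The subtlety is that a symbol $X_i$ sent with rate parameter $A_i=\gamma$ is usable for decoding block $j$ precisely when $\lceil i\gamma/K\rceil\le j$, i.e.\ roughly $i\le jK/\gamma$; so summing $\mathbb{P}\{A_i=\gamma\}$-weighted counts over $i$ up to $jK/\gamma$ and over $\gamma\ge\alpha=r(c)$ (symbols with $\gamma<\alpha$ land in even earlier blocks and are also fine, but one must check they don't over- or under-count) reproduces $jK\int_\alpha^\infty \gamma^{-1}dF_A(\gamma)=jK\,g(\alpha)$ before the erasure thinning by $c$. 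One must handle the ceiling function, the left/right continuity conventions on $g$ and $F_A$, and the boundary block $j\approx N\alpha/K$ carefully; the slack $\xi>0$ and the $\epsilon$-margins ($r(c)-\epsilon$, not $r(c)$) are exactly what give room to absorb all these $o(1)$ errors. A secondary technical point is ensuring the decoding of \emph{all} prefixes $B_1^j$ simultaneously (sequential decoding) rather than a single one; this is handled by the union bound in step (2) together with the fact that more received symbols only help, so it suffices to control the worst prefix.
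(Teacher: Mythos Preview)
Your decoding step (3) has a genuine gap. In the superposition MRS code, the encoding symbol $X_i$ is a random linear projection of the \emph{single} block $B_{\lceil iA_i/K\rceil}$, not of the whole prefix $B_1^j$. Thus the linear system you obtain from the received symbols in $S_j$ is block-diagonal: each equation involves only the $K$ unknowns of one block. Having $jK(1+\xi/2)$ equations in total therefore does not let you solve for all $jK$ bits; you could have many equations for some blocks and none for others (your lower bounds on $|S_{j'}|$ for all $j'\le j$ still permit $|S_j|-|S_{j-1}|=0$). The ``MDS-like'' property you invoke requires the coefficient vectors to be uniformly random over the full $jK$-dimensional space, which is not the case here.

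The paper's proof fixes this by analyzing a \emph{single} block $B_j$ and counting only the received symbols assigned to that particular block; via the integral identity $\int_0^1 F_A(\alpha/x)\,dx=\int_0^\alpha g(x)\,dx$ one gets $\mathbb{E}[S]\ge cK\,g(r(c))-1\ge K(1+\xi)-1$ and $\mathrm{Var}[S]\le K\zeta+1$, and then Chebyshev gives $S\ge K(1+\xi/2)$ with probability $\ge 1-\epsilon/2$. Note also that your union bound over $j$ is unnecessary: the definition of $\epsilon$-admissibility only asks that $\mathbb{P}\{M_m\neq\widetilde M_m\}\le\epsilon$ for each individual $m$, so it suffices to show that each fixed block $B_j$ is decoded with probability $\ge 1-\epsilon$. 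This matters, because with a per-block count the deviation probability is governed by $K$ (not $N$), so a union over $\sim N/K$ blocks would not be absorbed for fixed $K$ and $N\to\infty$.
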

\begin{proof}
Fix $\epsilon>0$. Let $r(c)$ be a rate-capacity function, and let
$g(\alpha)$ be a bounded and monotonically decreasing function (let
$g(\alpha)\le\zeta$ for all $\alpha\ge0$). Assume the condition
$c\cdot g\left(r(c)\right)\ge1+\xi$ is satisfied for some $\xi>0$.
We take $K\ge\max\left(4/\xi,\,32\cdot(\zeta+1)/(\epsilon\xi)\right)$.
We now consider the superposition MRS code with block size $K$ and
parameter $g(\alpha)$.

Fix any channel capacity $c$ with $r(c)>0$. Let $\alpha_{1}=r(c)$,
and $\alpha_{0}=r(c)-\epsilon$. At time instance $i$, the sender
generate an encoding symbol from the block $B_{\left\lceil i\cdot A_{i}/K\right\rceil }$
(note that $B_{i}=M_{(i-1)K+1}^{iK}$). Let $F_{A}(\alpha)=\int_{0}^{\alpha}g(x)dx-\alpha g(\alpha)$.
The probability that $B_{j}$ is chosen is
\[
\mathbb{P}\left\{ A_{i}\in\left(\frac{K(j-1)}{i},\frac{Kj}{i}\right]\right\} =F_{A}\left(\frac{Kj}{i}\right)-F_{A}\left(\frac{K(j-1)}{i}\right).
\]

Let $n\ge\frac{K}{\epsilon}$, and $j\le\frac{n\alpha_{0}}{K}+1$.
We will study whether the block $B_{j}$ can be decoded using $Y_{1}^{n}$
with error probability less than $\epsilon$ when the channel capacity
is $c$. If so, then $M_{1}^{\left\lfloor \alpha_{0}n\right\rfloor }$
can be decoded using $Y_{1}^{n}$ with bit error probability less
than $\epsilon$ whenever $n\ge\frac{K}{\epsilon}$, and thus the
rate-capacity pair $(\alpha_{1},c)$ is $\epsilon-$admissible. Let
$S$ be the random variable representing the number of times $B_{j}$
is chosen in the encoding of $X_{i}$, and $X_{i}$ is not erased,
for $i=1,...,n$. Its expected value is given by
\begin{eqnarray*}
\mathbb{E}[S] & = & c\cdot\sum_{i=1}^{n}\left(F_{A}\left(\frac{Kj}{i}\right)-F_{A}\left(\frac{K(j-1)}{i}\right)\right).
\end{eqnarray*}

Note that $F_{A}$ is monotonically increasing and not greater than
1 (and therefore $F_{A}\left(\frac{Kj}{nx}\right)$ is monotonically
decreasing with respect to $x$), we have
\begin{eqnarray*}
\sum_{i=1}^{n}F_{A}\left(\frac{Kj}{i}\right) & \ge & n\left(\int_{1/n}^{1}F_{A}\left(\frac{Kj}{nx}\right)dx\right)\\
 & \ge & n\left(\int_{0}^{1}F_{A}\left(\frac{Kj}{nx}\right)dx\right)-1,
\end{eqnarray*}
\begin{eqnarray*}
\sum_{i=1}^{n}F_{A}\left(\frac{K(j-1)}{i}\right) & \le & n\left(\int_{0}^{1}F_{A}\left(\frac{K(j-1)}{nx}\right)dx\right).
\end{eqnarray*}

Therefore
\begin{eqnarray*}
\mathbb{E}[S] & \ge & cn\left(\int_{0}^{1}\left(F_{A}\left(\frac{Kj}{nx}\right)-F_{A}\left(\frac{K(j-1)}{nx}\right)\right)dx\right)-c\\
 & \ge & cn\left(\int_{0}^{1}\left(F_{A}\left(\frac{Kj}{nx}\right)-F_{A}\left(\frac{K(j-1)}{nx}\right)\right)dx\right)-1,
\end{eqnarray*}
where
\begin{eqnarray*}
\int_{0}^{1}F_{A}\left(\frac{\alpha}{x}\right)dx & = & F_{A}(\alpha)-\int_{0}^{1}xdF_{A}\left(\frac{\alpha}{x}\right)\\
 & = & F_{A}(\alpha)+\alpha\int_{\alpha}^{\infty}\frac{1}{x}dF_{A}\left(x\right)\\
 & = & F_{A}(\alpha)+\alpha\int_{\alpha}^{\infty}\frac{1}{x}\cdot-xdg(x)\\
 & = & F_{A}(\alpha)-\alpha\int_{\alpha}^{\infty}1dg(x)\\
 & = & F_{A}(\alpha)+\alpha g(\alpha)\\
 & = & \int_{0}^{\alpha}g(x)dx.
\end{eqnarray*}

Hence, as $g(x)$ is monotonically decreasing,
\begin{eqnarray*}
\mathbb{E}[S] & \ge & cn\cdot\left(\int_{0}^{Kj/n}g(x)dx-\int_{0}^{K(j-1)/n}g(x)dx\right)-1\\
 & = & cn\cdot\int_{K(j-1)/n}^{Kj/n}g(x)dx-1\\
 & \ge & cK\cdot g\left(\frac{Kj}{n}\right)-1\\
 & \ge & cK\cdot g\left(\alpha_{0}+\frac{K}{n}\right)-1\\
 & \ge & cK\cdot g\left(\alpha_{0}+\epsilon\right)-1\\
 & = & cK\cdot g\left(\alpha_{1}\right)-1\\
 & \ge & K(1+\xi)-1.
\end{eqnarray*}

On the other hand,
\begin{eqnarray*}
\sum_{i=1}^{n}F_{A}\left(\frac{Kj}{i}\right) & \le & n\left(\int_{0}^{1}F_{A}\left(\frac{Kj}{nx}\right)dx\right),
\end{eqnarray*}
and
\begin{eqnarray*}
\sum_{i=1}^{n}F_{A}\left(\frac{K(j-1)}{i}\right) & \ge & n\left(\int_{1/n}^{1}F_{A}\left(\frac{K(j-1)}{nx}\right)dx\right)\\
 & \ge & n\left(\int_{0}^{1}F_{A}\left(\frac{K(j-1)}{nx}\right)dx\right)-1.
\end{eqnarray*}

Let 
\[
p_{i}=c\left(F_{A}\left(\frac{Kj}{i}\right)-F_{A}\left(\frac{K(j-1)}{i}\right)\right).
\]

The variance of $S$ can be given by
\begin{eqnarray*}
\mathrm{Var}[S] & = & \sum_{i=1}^{n}\left(p_{i}-p_{i}^{2}\right)\\
 & \le & \sum_{i=1}^{n}p_{i}\\
 & \le & cn\cdot\left(\int_{0}^{1}\left(F_{A}\left(\frac{Kj}{nx}\right)-F_{A}\left(\frac{K(j-1)}{nx}\right)\right)dx\right)+c\\
 & \le & n\cdot\left(\int_{0}^{1}\left(F_{A}\left(\frac{Kj}{nx}\right)-F_{A}\left(\frac{K(j-1)}{nx}\right)\right)dx\right)+1\\
 & = & n\cdot\left(\int_{0}^{Kj/n}g_{2}(x)dx-\int_{0}^{K(j-1)/n}g_{2}(x)dx\right)+1\\
 & = & n\cdot\int_{K(j-1)/n}^{Kj/n}g_{2}(x)dx+1\\
 & \le & K\zeta+1,
\end{eqnarray*}
\[
\sigma_{S}\le\sqrt{K}\cdot\sqrt{\zeta+1}.
\]

As a result, for $K$ large enough, $S$ is close to $\mathbb{E}[S]$
with high probability. Chebyshev's inequality gives
\[
\mathbb{P}\left\{ S<K(1+\xi)-1-\frac{\sqrt{K}\cdot\sqrt{\zeta+1}}{\sqrt{\epsilon/2}}\right\} \le\epsilon/2,
\]
where
\begin{eqnarray*}
\lefteqn{K(1+\xi)-1-\frac{\sqrt{K}\cdot\sqrt{\zeta+1}}{\sqrt{\epsilon/2}}}\\
 & = & K\left(1+\xi-\frac{1}{K}-\frac{\sqrt{\zeta+1}}{\sqrt{K\epsilon/2}}\right)\\
 & \ge & K\left(1+\xi/2\right)
\end{eqnarray*}
by the assumption $K\ge\max\left(4/\xi,\,32\cdot(\zeta+1)/(\epsilon\xi)\right)$.
When random linear projections are used, for a fixed probability of
error $\epsilon/2$, the number of symbols needed to decode a block
of $K$ bits is $K+o(K)$, which is smaller than $K\left(1+\xi/2\right)$
for sufficiently large $K$. Therefore the rate function $r(c)$ is
$\epsilon$-admissable by the superposition MRS code with parameter
$g(\alpha)$ for $K$ large enough.
\end{proof}
The following theorem gives the achievable region for the rate-capacity
function, which coincides with the region attained by superposition
MRS code.
\begin{thm}
\label{thm:1Diff}A rate-capacity function $r(c)$ is achievable if
and only if
\[
\int_{0}^{1}\frac{1}{c}dr(c)\le1.
\]
\end{thm}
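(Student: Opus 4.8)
\emph{Structure.} The statement has an ``if'' half (achievability) that is essentially a corollary of Theorem~\ref{thm:1Dsupregion}, and an ``only if'' half (converse) that is where the real work lies.

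\emph{Achievability.} Suppose $\int_{0}^{1}\frac{1}{c}\,dr(c)\le 1$ and fix $\epsilon>0$. Since $r$ is bounded I would first choose $\delta>0$ so small that $r'(c):=(1-\delta)r(c)$ satisfies $r'(c)\ge r(c)-\epsilon/2$ for all $c$; then $\int_{0}^{1}\frac{1}{c}\,dr'(c)=(1-\delta)\int_{0}^{1}\frac{1}{c}\,dr(c)\le 1-\delta$. For $\alpha>0$ put $c_{\alpha}:=\inf\{c:\,r'(c)\ge\alpha\}$; monotonicity and right-continuity of $r'$ make $\alpha\mapsto c_{\alpha}$ nondecreasing and left-continuous, with $c_{\alpha}\ge\eta>0$ and $c_{\alpha}=+\infty$ for $\alpha>r'(1)$. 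Pick $\xi>0$ with $(1+\xi)(1-\delta)<1$ and set $g_{0}(\alpha):=(1+\xi)/c_{\alpha}$ (with $1/(+\infty):=0$), which is bounded, nonincreasing and left-continuous. A change of variables / Stieltjes integration by parts gives $\int_{0}^{\infty}\frac{1}{c_{\alpha}}\,d\alpha=\int_{0}^{1}\frac{1}{c}\,dr'(c)$, so $\int_{0}^{\infty}g_{0}=(1+\xi)\int_{0}^{1}\frac1c\,dr'(c)<1$. I would then enlarge $g_{0}$ to a legitimate parameter $g\ge g_{0}$ with $\int_{0}^{\infty}g=1$ (for instance $g(\alpha)=g_{0}(\alpha)+\theta\,\mathbf{1}\{\alpha\le T\}$ with $\theta,T>0$ tuned so the integral is exactly $1$), preserving boundedness, monotonicity and left-continuity. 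Since $c_{r'(c)}\le c$, for every $c$ with $r'(c)>0$ we get $c\,g(r'(c))\ge c\,g_{0}(r'(c))=(1+\xi)\,c/c_{r'(c)}\ge 1+\xi$, so Theorem~\ref{thm:1Dsupregion} yields, for any target accuracy and all large $K$, a superposition MRS code by which $r'$ is admissible. Such a code has $r$ being $\epsilon$-admissible (as $r'(c)-\epsilon/2\ge r(c)-\epsilon$), and $\epsilon$ was arbitrary; hence $r$ is achievable.

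\emph{Converse.} Regarding $\int_{0}^{1}\frac1c\,dr(c)$ as a Riemann--Stieltjes integral, it equals the supremum over finite chains $0=c_{0}<c_{1}<\dots<c_{m}\le1$ of $\sum_{j=1}^{m}\frac{r(c_{j})-r(c_{j-1})}{c_{j}}$, so it suffices to bound each such sum by $1$. Fix a chain, an $\epsilon$-admissible code and a large $N$; write $r_{j}=r(c_{j})$, $s_{j}=\lceil c_{j}N\rceil$, $\ell_{j}=\lfloor N(r_{j}-\epsilon)\rfloor$ (with $\ell_{0}=0$), and condition on $Q$ throughout (suppressed below). The key point is erasure robustness: since the capacity-$c_{j}$ erasure pattern is i.i.d. and the number of surviving symbols concentrates at $c_{j}N$, the decoder recovers $M_{1}^{\ell_{j}}$ with small bit-error from a uniformly random $s_{j}$-subset of $\{X_{1},\dots,X_{N}\}$, hence from all but a vanishing fraction of $s_{j}$-subsets; I will argue in the idealization ``$X_{\mathcal{S}}$ determines $M_{1}^{\ell_{j}}$ for every $s_{j}$-subset $\mathcal{S}$'', the honest version differing by Fano/union-bound corrections that are $o(N)$ after division by $s_{j}$. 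Define $R_{j}:=\sup_{|\mathcal{S}|=s_{j}}H(X_{\mathcal{S}}\mid M_{1}^{\ell_{j}})$. From $H(X_{\mathcal{S}})\le s_{1}$ and $I(X_{\mathcal{S}};M_{1}^{\ell_{1}})=\ell_{1}$ for $|\mathcal{S}|=s_{1}$ one gets $R_{1}\le s_{1}-\ell_{1}$. For $j\ge2$, since $X_{\mathcal{S}}$ ($|\mathcal{S}|=s_{j}$) determines $M_{1}^{\ell_{j}}\supseteq M_{\ell_{j-1}+1}^{\ell_{j}}$,
\[
\sup_{|\mathcal{S}|=s_{j}}H\!\left(X_{\mathcal{S}}\mid M_{1}^{\ell_{j-1}}\right)=(\ell_{j}-\ell_{j-1})+R_{j};
\]
meanwhile Han's inequality applied to the $s_{j}$ variables $(X_{i})_{i\in\mathcal{S}}$ (conditioned on $M_{1}^{\ell_{j-1}}$), together with the fact that every $s_{j-1}$-subset $\mathcal{A}\subseteq\mathcal{S}$ also determines $M_{1}^{\ell_{j-1}}$, gives $H(X_{\mathcal{S}}\mid M_{1}^{\ell_{j-1}})\le\frac{s_{j}}{s_{j-1}}R_{j-1}$. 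Combining, $(\ell_{j}-\ell_{j-1})+R_{j}\le\frac{s_{j}}{s_{j-1}}R_{j-1}$, i.e. $\frac{R_{j}}{s_{j}}\le\frac{R_{j-1}}{s_{j-1}}-\frac{\ell_{j}-\ell_{j-1}}{s_{j}}$. Unrolling from $R_{1}/s_{1}\le1-\ell_{1}/s_{1}$ and using $R_{m}\ge0$,
\[
\sum_{j=1}^{m}\frac{\ell_{j}-\ell_{j-1}}{s_{j}}\le1,
\]
and letting $N\to\infty$ (the $\epsilon$'s inside the floors cancel in the telescoping numerators) and then $\epsilon\to0$ yields $\sum_{j}\frac{r_{j}-r_{j-1}}{c_{j}}\le1$; taking the supremum over chains finishes the converse.

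\emph{Main obstacle.} The achievability is bookkeeping once the identity $\int_{0}^{\infty}g_{0}=(1+\xi)\int_{0}^{1}\frac1c\,dr'$ and the perturbation $r\rightsquigarrow(1-\delta)r$ are in place. The substance is the converse, and inside it: (i) turning ``average bit-error $\le\epsilon$ over the random erasure pattern'' into a clean ``recovery from (essentially) every $s_{j}$-subset'', so that both the definition of $R_{j}$ and the application of Han's inequality to sub-subsets are legitimate; and (ii) propagating the resulting $o(1)$ errors through the $m$-step recursion and through the passage to the Stieltjes integral. I expect (i) to be the crux — one must exploit the erasure robustness \emph{more than once per level}, which the two-level instance $c_{1}=\tfrac12<c_{2}=1$ already shows is unavoidable, since natural genie/nesting arguments that fix the erasure pattern only deliver the far weaker $r(c)\le c$.
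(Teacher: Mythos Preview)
Your achievability is essentially the paper's: both invert $r$ to get $g$ and invoke Theorem~\ref{thm:1Dsupregion}; the paper shifts by $\epsilon/2$ and renormalizes, you scale by $1-\delta$ and pad --- cosmetic differences only.

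Your converse takes a different technical route. The paper couples the erasure channels in a degraded (cascaded) fashion, sets $f(c)=\frac{1}{Nc}H(Y_{c,1}^{N}\mid Q,M_{1}^{\lfloor Nr_{2}(c)\rfloor},E_{c,1}^{N})$, and proves the key inequality $H(Y_{c_{0},1}^{N}\mid M_{1}^{k_{0}},E_{c_{0}})\ge\frac{c_{0}}{c_{1}}H(Y_{c_{1},1}^{N}\mid M_{1}^{k_{0}},E_{c_{1}})$ by a chain-rule expansion along the coupling. This yields exactly your recursion $\frac{R_{j}}{s_{j}}\le\frac{R_{j-1}}{s_{j-1}}-\frac{\ell_{j}-\ell_{j-1}}{s_{j}}$ (with $f(c_{j})$ playing the role of $R_{j}/s_{j}$), but because it works directly with the erasure output rather than with deterministic subsets, Fano applies immediately and your obstacle~(i) simply does not arise. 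The paper's coupling argument is, in effect, an alternative proof of the averaged Han inequality $\frac{1}{s_{j-1}}\mathrm{avg}_{|\mathcal{A}|=s_{j-1}}H(X_{\mathcal{A}}\mid\cdot)\ge\frac{1}{s_{j}}\mathrm{avg}_{|\mathcal{S}|=s_{j}}H(X_{\mathcal{S}}\mid\cdot)$, so the two converses are close cousins; the paper in fact cites \cite{albanese1996}, whose PET converse is in the spirit of your subset argument.

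One genuine issue with your sketch: taking $R_{j}$ as a \emph{supremum} breaks the chain step. You need $H(X_{\mathcal{S}^{*}}\mid M_{1}^{\ell_{j-1}})\ge(\ell_{j}-\ell_{j-1})+R_{j}$ for the subset $\mathcal{S}^{*}$ \emph{attaining} the sup, i.e.\ that particular $\mathcal{S}^{*}$ must recover $M_{\ell_{j-1}+1}^{\ell_{j}}$; but the maximizer may well be one of the rare ``bad'' subsets, and a union bound is powerless since a single bad subset contaminates a supremum. The fix is to replace sup by the uniform average $\bar{R}_{j}$: the Han step survives (by symmetry the inner average over sub-subsets equals the global average over $s_{j-1}$-subsets), and the chain step becomes a Fano bound on the average mutual information, which \emph{is} controlled by the erasure-channel guarantee via the monotonicity of $s\mapsto\mathrm{avg}_{|\mathcal{S}|=s}H(M_{1}^{\ell_{j}}\mid X_{\mathcal{S}})$ together with concentration of the survivor count (take $s_{j}$ slightly above $c_{j}N$ and let the slack vanish after $\epsilon\to0$). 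With that amendment your proof is correct.
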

\begin{proof}
[Proof of achievability] Assume $r(c)$ is a rate-capacity function
satisfying $\int_{0}^{1}\frac{1}{c}dr(c)\le1$. We can find $\eta>0$
such that $r(c)=0$ for $c\le\eta$. Let $g(\alpha)=1/r^{-1}(\alpha)$,
where $r^{-1}(\alpha)=\inf\{c|r(c)\ge\alpha\}$. Then $g(\alpha)$
is monotonically decreasing and bounded above by $1/\eta$. It is
implied by $\int_{0}^{1}\frac{1}{c}dr(c)\le1$ that $\int_{0}^{\infty}g(\alpha)d\alpha\le1$.
As $r^{-1}(r(c))\le c$, we also have $c\cdot g\left(r(c)\right)\ge1$
for all $c>0$ with $r(c)>0$.

Fix $\epsilon>0$. Note that $\int_{0}^{\infty}g(\alpha+\epsilon/2)d\alpha<1$.
We can define a function $g_{2}(\alpha)$ by
\[
g_{2}(\alpha)=(1+\xi)\cdot g(\alpha+\epsilon/2),
\]
where $\xi>0$ such that $\int_{0}^{\infty}g_{2}(\alpha)d\alpha=1$.
We know $g_{2}$ is bounded above by $(1+\xi)/\eta$. Define a new
rate-capacity function $r_{2}(c)=\max(r(c)-\epsilon/2,0)$. We have,
for any $c$ with $r_{2}(c)>0$,
\[
c\cdot g_{2}\left(r_{2}(c)\right)=c\cdot(1+\xi)g\left(r(c)\right)\ge1+\xi.
\]

Consider the superposition MRS code with parameter $g_{2}(\alpha)$.
By Theorem \ref{thm:1Dsupregion}, the rate-capacity function $r_{2}(c)$
is $\epsilon/2$-admissible by the code for sufficiently large block
size, which implies that $r(c)$ is $\epsilon$-admissible by the
code.
\end{proof}
\[
\,
\]

\begin{proof}
[Proof of converse] The proof employs a similar idea as in \cite{albanese1996}.
As $r(c)$ is a rate-capacity function, we can find $\eta>0$ such
that $r(c)=0$ for $c\le\eta$. For any $\epsilon>0$, consider a
code where $r(c)$ is \emph{$\epsilon$}-admissible. The message $\{M_{i}\}$
are encoded into binary symbols $\{X_{i}\}$, and sent through an
erasure channel with capacity $c$ (we call it Channel $c$) to give
$\{Y_{c,i}\}$ for all $c>0$. Assume we have the following for any
$c$,
\[
\max_{m\le n(r(c)-\epsilon)}\mathbb{P}\left\{ M_{m}\neq\widetilde{M}_{m}\left(Q,Y_{c,1}^{n}\right)\right\} <\epsilon\:\text{ for any }n\ge N_{0}.
\]

Let $N\ge N_{0}$. Let $E_{c,i}$ be the indicator of the events of
erasure in Channel $c$ ($E_{c,i}=1$ means that $Y_{c,i}=e$). Note
that whether the message can be decoded at the receiver with Channel
$c$ depends only on the marginal distribution of $\left\{ E_{c,i},X_{i},M_{i},Q\right\} _{i\in\mathbb{N}}$,
and is conditional independent of $E_{c_{2},i}$ for $c_{2}\neq c$
given $\left\{ E_{c,i},X_{i},M_{i},Q\right\} _{i\in\mathbb{N}}$.
Therefore, we may modify the joint distribution of $\left\{ E_{c,i}\right\} _{c\in(0,1],i\in\mathbb{N}}$
without affecting the result as long as the distributions of $\left\{ E_{c,i},X_{i},M_{i},Q\right\} _{i\in\mathbb{N}}$
are preserved. From now on, we assume the channels are cascaded, i.e.
whenever $c_{0}<c_{1}\le1$, the Markov chain $X_{i}\rightarrow Y_{c_{1},i}\rightarrow Y_{c_{0},i}$
holds, and 
\[
E_{c_{0},i}=\begin{cases}
E_{c_{1},i} & \text{ w.p. }c_{0}/c_{1}\\
1 & \text{ w.p. }1-c_{0}/c_{1}.
\end{cases}
\]

Let 
\begin{equation}
r_{2}(c)=\max(r(c)-\epsilon,0).\label{eq:1Diff_Def_r2}
\end{equation}

Define
\begin{equation}
f(c)=\frac{1}{Nc}\cdot H(Y_{c,1}^{N}|Q,M_{1}^{\left\lfloor Nr_{2}(c)\right\rfloor },E_{c,1}^{N}).\label{eq:1Diff_Def_f(c)}
\end{equation}

The common random variable $Q$ will be omitted for simplicity. Consider
Channel $c_{0}$ and Channel $c_{1}$ where $\eta/2\le c_{0}<c_{1}\le1$.
Let $k_{0}=\left\lfloor Nr_{2}(c_{0})\right\rfloor $, and $k_{1}=\left\lfloor Nr_{2}(c_{1})\right\rfloor $.
Note that
\begin{eqnarray}
\lefteqn{H(Y_{c_{1},1}^{N}|M_{1}^{k_{1}},E_{c_{1},1}^{N})}\nonumber \\
 & = & H(Y_{c_{1},1}^{N},M_{k_{0}+1}^{k_{1}}|M_{1}^{k_{0}},E_{c_{1},1}^{N})-H(M_{k_{0}+1}^{k_{1}}|M_{1}^{k_{0}},E_{c_{1},1}^{N}),\label{eq:1Diff_sum_term}
\end{eqnarray}
where, due to the assumption that $M_{i}$ are i.i.d. uniform in $\{0,1\}$,
\[
H(M_{k_{0}+1}^{k_{1}}|M_{1}^{k_{0}},E_{c_{1},1}^{N})=H(M_{k_{0}+1}^{k_{1}})=k_{1}-k_{0}\ge N(r_{2}(c_{1})-r_{2}(c_{0}))-1.
\]

As $M_{1}^{k_{1}}$ can be decoded using $Y_{c_{1},1}^{N}$ with bit
error probability less than $\epsilon$, by Fano's inequality,
\begin{eqnarray*}
\lefteqn{H(Y_{c_{1},1}^{N},M_{k_{0}+1}^{k_{1}}|M_{1}^{k_{0}},E_{c_{1},1}^{N})}\\
 & = & H(Y_{c_{1},1}^{N}|M_{1}^{k_{0}},E_{c_{1},1}^{N})\\
 &  & \;+H(M_{k_{0}+1}^{k_{1}}|Y_{c_{1},1}^{N},M_{1}^{k_{0}},E_{c_{1},1}^{N})\\
 & \le & H(Y_{c_{1},1}^{N}|M_{1}^{k_{0}},E_{c_{1},1}^{N})\\
 &  & \;+\sum_{i=k_{0}+1}^{k_{1}}H(M_{i}|Y_{c_{1},1}^{N},M_{1}^{k_{0}},E_{c_{1},1}^{N})\\
 & \le & H(Y_{c_{1},1}^{N}|M_{1}^{k_{0}},E_{c_{1},1}^{N})+(k_{1}-k_{0})\cdot H(\epsilon)\\
 & \le & H(Y_{c_{1},1}^{N}|M_{1}^{k_{0}},E_{c_{1},1}^{N})+\left(N(r_{2}(c_{1})-r_{2}(c_{0}))+1\right)\cdot H(\epsilon)\\
 & \le & H(Y_{c_{1},1}^{N}|M_{1}^{k_{0}},E_{c_{1},1}^{N})+\left(N(r_{2}(c_{1})-r_{2}(c_{0}))\right)\cdot H(\epsilon)+2.
\end{eqnarray*}

Let $E'_{i}\stackrel{i.i.d.}{\sim}\mathrm{Bern}(1-c_{0}/c_{1})$ with
\[
E_{c_{0},i}=\begin{cases}
E_{c_{1},i} & \text{ if }E'_{i}=0\\
1 & \text{ if }E'_{i}=1.
\end{cases}
\]

We can obtain

\begin{eqnarray*}
\lefteqn{H(Y_{c_{0},1}^{N}|M_{1}^{k_{0}},E_{c_{0},1}^{N})}\\
 & = & \sum_{i=1}^{N}H(Y_{c_{0},i}|M_{1}^{k_{0}},Y_{c_{0},1}^{i-1},E_{c_{0},1}^{N})\\
 & \stackrel{\mathrm{(i)}}{\ge} & \sum_{i=1}^{N}H(Y_{c_{0},i}|M_{1}^{k_{0}},Y_{c_{1},1}^{i-1},E_{c_{0},1}^{N})\\
 & \stackrel{\mathrm{(ii)}}{=} & \sum_{i=1}^{N}H(Y_{c_{0},i}|M_{1}^{k_{0}},Y_{c_{1},1}^{i-1},E_{c_{0},i})\\
 & \ge & \sum_{i=1}^{N}H(Y_{c_{0},i}|M_{1}^{k_{0}},Y_{c_{1},1}^{i-1},E_{c_{0},i},E'_{i})\\
 & \stackrel{\mathrm{(iii)}}{=} & \sum_{i=1}^{N}\left(\frac{c_{0}}{c_{1}}\cdot H(Y_{c_{0},i}|M_{1}^{k_{0}},Y_{c_{1},1}^{i-1},E_{c_{0},i},E'_{i}=0)\right)\\
 & = & \sum_{i=1}^{N}\left(\frac{c_{0}}{c_{1}}\cdot H(Y_{c_{1},i}|M_{1}^{k_{0}},Y_{c_{1},1}^{i-1},E_{c_{1},i},E'_{i}=0)\right)\\
 & \stackrel{\mathrm{(iv)}}{=} & \sum_{i=1}^{N}\left(\frac{c_{0}}{c_{1}}\cdot H(Y_{c_{1},i}|M_{1}^{k_{0}},Y_{c_{1},1}^{i-1},E_{c_{1},i})\right)\\
 & \stackrel{\mathrm{(v)}}{\ge} & \sum_{i=1}^{N}\left(\frac{c_{0}}{c_{1}}\cdot H(Y_{c_{1},i}|M_{1}^{k_{0}},Y_{c_{1},1}^{i-1},E_{c_{1},1}^{N})\right)\\
 & = & \frac{c_{0}}{c_{1}}\cdot H(Y_{c_{1},1}^{N}|M_{1}^{k_{0}},E_{c_{1},1}^{N}),
\end{eqnarray*}
where (i) is due to $H(Y_{c_{0},1}^{i-1}|Y_{c_{1},1}^{i-1},E_{c_{0},1}^{N})=0$,
(ii) is due to $\left.(E_{c_{0},1}^{i-1},E_{c_{0},i+1}^{N})\left.\perp\!\!\!\perp\right.Y_{c_{0},i}\right|(M_{1}^{k_{0}},Y_{c_{1},1}^{i-1})$,
(iii) is obtained by conditioning on $E'_{i}$ and by $Y_{c_{0},i}=e$
when $E'_{i}=1$, (iv) is due to $E'_{i}\left.\perp\!\!\!\perp\right.(Y_{c_{1},i},M_{1}^{k_{0}},Y_{c_{0},1}^{i-1},E_{c_{1},i})$,
and (v) is due to $H(E_{c_{1},1}^{i-1}|Y_{c_{1},1}^{i-1})=0$ and
$\left.E_{c_{1},i+1}^{N}\left.\perp\!\!\!\perp\right.Y_{c_{1},i}\right|(M_{1}^{k_{0}},Y_{c_{1},1}^{i-1})$.

Hence by (\ref{eq:1Diff_sum_term}),

\begin{eqnarray*}
\lefteqn{H(Y_{c_{1},1}^{N}|M_{1}^{k_{1}},E_{c_{1},1}^{N})}\\
 & \le & \frac{c_{1}}{c_{0}}\cdot H(Y_{c_{0},1}^{N}|M_{1}^{k_{0}},E_{c_{0},1}^{N})+\left(N(r_{2}(c_{1})-r_{2}(c_{0}))\right)\cdot H(\epsilon)+2\\
 &  & \;-\left(N(r_{2}(c_{1})-r_{2}(c_{0}))-1\right),
\end{eqnarray*}

After replacing the terms by $f(c)$ using (\ref{eq:1Diff_Def_f(c)}),
\[
\frac{1}{c_{1}}\left(r_{2}(c_{1})-r_{2}(c_{0})\right)\le\frac{f(c_{0})-f(c_{1})+\frac{3}{Nc_{1}}}{1-H(\epsilon)}\le\frac{f(c_{0})-f(c_{1})+\frac{6}{N\eta}}{1-H(\epsilon)}.
\]

Due to the monotonicity of $r_{2}(c)$,
\begin{eqnarray*}
\lefteqn{\int_{c_{0}}^{c_{1}}\frac{1}{c}dr_{2}(c)-\left(\frac{1}{c_{0}}-\frac{1}{c_{1}}\right)\left(r_{2}(c_{1})-r_{2}(c_{0})\right)}\\
 & \le & \frac{1}{c_{0}}\left(r_{2}(c_{1})-r_{2}(c_{0})\right)-\left(\frac{1}{c_{0}}-\frac{1}{c_{1}}\right)\left(r_{2}(c_{1})-r_{2}(c_{0})\right)\\
 & = & \frac{1}{c_{1}}\left(r_{2}(c_{1})-r_{2}(c_{0})\right)\\
 & \le & \frac{f(c_{0})-f(c_{1})+\frac{6}{N\eta}}{1-H(\epsilon)}.
\end{eqnarray*}

Let $m=\left\lfloor \sqrt{N}\right\rfloor $, and $c_{0}=c'_{0}<c'_{1}<...<c'_{m}=c_{1}$
such that $r_{2}(c)$ is continuous at $c=c'_{1},...,c'_{m-1}$ and
\[
\frac{1}{c'_{i}}-\frac{1}{c'_{i+1}}<\frac{2}{m}\left(\frac{1}{c_{0}}-\frac{1}{c_{1}}\right)\text{ for }i=0,...,m-1.
\]

We can always find such $c'_{i}$ as a monotonic function has at most
countable discontinuities. Note that for $i=0,...,m-1$,
\[
\frac{1}{c'_{i}}-\frac{1}{c'_{i+1}}<\frac{2}{m}\left(\frac{1}{c_{0}}-\frac{1}{c_{1}}\right)\le\frac{4}{m\eta}
\]
as $\eta/2\le c_{0}<c_{1}$. Then we have
\begin{eqnarray*}
\lefteqn{\int_{c_{0}}^{c_{1}}\frac{1}{c}dr_{2}(c)}\\
 & = & \sum_{i=0}^{m-1}\int_{c'_{i}}^{c'_{i+1}}\frac{1}{c}dr_{2}(c)\\
 & \le & \sum_{i=0}^{m-1}\left(\frac{f(c'_{i})-f(c'_{i+1})+\frac{6}{N\eta}}{1-H(\epsilon)}+\left(\frac{1}{c'_{i}}-\frac{1}{c'_{i+1}}\right)\left(r_{2}(c'_{i+1})-r_{2}(c'_{i})\right)\right)\\
 & \le & \frac{f(c_{0})-f(c_{1})+\frac{6m}{N\eta}}{1-H(\epsilon)}+\frac{4}{m\eta}\left(r_{2}(c_{1})-r_{2}(c_{0})\right)\\
 & \le & \frac{f(c_{0})-f(c_{1})+\frac{6m}{N\eta}}{1-H(\epsilon)}+\frac{4}{m\eta}r(1).
\end{eqnarray*}

Note that
\begin{eqnarray*}
f(c) & = & \frac{1}{Nc}\cdot H(Y_{c,1}^{N}|Q,M_{1}^{\left\lfloor N(r(c)-\epsilon)\right\rfloor },E_{c,1}^{N})\\
 & \le & \frac{1}{Nc}\cdot H(Y_{c,1}^{N}|,E_{c,1}^{N})\\
 & \le & 1.
\end{eqnarray*}

Thus we have, for any $\eta/2\le c_{0}<c_{1}\le1$,
\[
\int_{c_{0}}^{c_{1}}\frac{1}{c}dr_{2}(c)\le\frac{1+\frac{6m}{N\eta}}{1-H(\epsilon)}+\frac{4}{m\eta}r(1).
\]

Therefore we can obtain an inequality on $r(c)$ using (\ref{eq:1Diff_Def_r2})
by 
\begin{eqnarray*}
\int_{c_{0}}^{c_{1}}\frac{1}{c}dr(c) & = & \int_{c_{0}}^{c_{1}}\frac{1}{c}d\left(r(c)-\epsilon\right)\\
 & \le & \int_{c_{0}}^{c_{1}}\frac{1}{c}dr_{2}(c)+\frac{\epsilon}{\eta}\\
 & \le & \frac{1+\frac{6m}{N\eta}}{1-H(\epsilon)}+\frac{4}{m\eta}r(1)+\frac{\epsilon}{\eta}.
\end{eqnarray*}

The inequality holds for arbitrarily large $N$ and arbitrarily small
$\epsilon$. We can conclude that
\[
\int_{0}^{1}\frac{1}{c}dr(c)=\int_{\eta/2}^{1}\frac{1}{c}dr(c)\le1.
\]
\end{proof}
\begin{rem*}
It is shown in the theorem that the achievable region of MRS codes
coincides with that of priority encoding transmission or multilevel
diversity coding. When there are more than one transmitters, the MRS
codes no longer admit the same region as priority encoding transmission
or multilevel diversity coding in general.
\end{rem*}

\section{The Multiple Transmitter Setting\label{sec:MultiD}}

In this section, we will discuss the case where there are $d$ transmitters
which cooperate to send the same sequence of data $\left\{ M_{i}\right\} $,
but they may or may not be transmitting the same sequence of encoding
symbols. There are multiple receivers that wish to decode $\left\{ M_{i}\right\} $
sequentially at different rates. A receiver has an erasure channel
connected to each of the $d$ transmitters, and the channels may have
different capacities. The central question of this section is that,
given a set of receivers with different capacities and different rate
requirements, is it possible to design a code which can satisfy the
need of all receivers?

We call the transmitters as Transmitter $k$, where $k=1,...,d$.
We denote the symbol sent by Transmitter $k$ at time $n$ by $X_{k,n}\in\{0,1\}$.
For a receiver with a channel from Transmitter $k$ with capacity
$c$, denote the symbol received from Transmitter $k$ by $Y_{k,c,n}\in\{0,1,e\}$,
and the indicator of erasure $E_{k,c,n}\in\{0,1\}$ ($E_{k,c,n}=1$
indicates an erasure). The definition of a multi-transmitter multi-rate
sequential code is similar to that in the single transmitter case,
and will be omitted.

For a vector of channel capacities $\mathbf{c}\in[0,1]^{d}$, where
$c_{k}$ is the capacity of the channel to Transmitter $k$, we write
\begin{eqnarray*}
X_{n} & = & \left(X_{1,n},...,X_{d,n}\right),\\
Y_{\mathbf{c},n} & = & \left(Y_{1,c_{1},n},...,Y_{d,c_{d},n}\right),\\
E_{\mathbf{c},n} & = & \left(E_{1,c_{1},n},...,E_{d,c_{d},n}\right).
\end{eqnarray*}

We use the notation $Y_{\mathbf{c},a}^{b}=(Y_{\mathbf{c},a},Y_{\mathbf{c},a+1},...,Y_{\mathbf{c},b})$.
From now on, we refer to the receiver with channel capacities $\mathbf{c}$
as Receiver $\mathbf{c}$. We write the sum of capacities in $\mathbf{c}$
by $\Sigma(\mathbf{c})=\sum_{i=1}^{d}c_{i}$.
\begin{defn}
[admissible pair] A rate-capacity pair $(r,\mathbf{c})$ is called
\emph{$\epsilon$-admissible} by a code if Receiver $\mathbf{c}$
can decode the first $N(r-\epsilon)$ bits $M_{1}^{N(r-\epsilon)}$
with bit error probability less than $\epsilon$ when the first $N$
symbols $Y_{\mathbf{c},1}^{N}$ are received, for sufficiently large
$N$. More precisely, there exist $N_{0}$ such that 
\[
\mathbb{P}\left\{ M_{m}\neq\widetilde{M}_{m}\left(Q,Y_{\mathbf{c},1}^{N}\right)\right\} \le\epsilon\:\text{ for any }N\ge N_{0},m\le N(r-\epsilon).
\]

\end{defn}

We use a function to characterize the rate of a code. We call $r:[0,1]^{d}\to[0,\infty)$
a \emph{rate-capacity function} if it is monotonically increasing
and right continuous along each of the $d$ dimensions, and there
exist an $\eta>0$ such that $r(\mathbf{c})=0$ for $\Sigma(\mathbf{c})\le\eta$.
\begin{defn}
[rate of MRS code] A rate-capacity function $r(\mathbf{c})$ is
called \emph{$\epsilon$-admissible} by a code if all of the pairs
$(\mathbf{c},r(\mathbf{c}))$ are \emph{$\epsilon$}-admissible by
the code.
\end{defn}

\begin{defn}
[achievable rate-capacity functions] A rate-capacity function $r(\mathbf{c})$
is \emph{achievable} if for any $\epsilon>0$, there exist a code
where $r(\mathbf{c})$ is \emph{$\epsilon-$}admissible by that code.
\end{defn}

The superposition MRS code for multiple transmitters is similar to
that for single transmitter.
\begin{defn}
[superposition MRS code] A superposition MRS code is characterized
by the block size $K$ and the parameter $g:[0,\infty)\to[0,\infty)^{d}$
which is bounded, monotonically decreasing and left continuous along
each dimension with $\int_{0}^{\infty}g_{k}(\alpha)d\alpha=1$ for
$k=1,...,d$ (write $g_{k}(\alpha)$ for the $k$-th entry of $g(\alpha)$).
Transmitter $k$ generates encoding symbols using the single transmitter
superposition MRS code with block size $K$ and parameter $g_{k}(\alpha)$.
\end{defn}
We say that a rate-capacity function is achievable by superposition
MRS code if it is \emph{$\epsilon-$}admissible by a superposition
MRS code for arbitrarily small $\epsilon$. We give the necessary
and sufficient condition on the achievability by superposition MRS
code.
\begin{thm}
\label{thm:MDsuperIff}The rate-capacity function $r(\mathbf{c})$
is achievable by superposition MRS code if and only if there exists
a function $g:[0,\infty)\to[0,\infty)^{d}$ which is bounded, monotonically
decreasing and left continuous along each of the $d$ dimensions satisfying
\[
\int_{0}^{\infty}g_{k}(\alpha)d\alpha=1\text{ for }k=1,...,d,\text{ and}
\]
\[
\mathbf{c}\cdot g\left(r(\mathbf{c})\right)\ge1\text{ for any }\mathbf{c}\in[0,1]^{d}\text{ with }r(\mathbf{c})>0.
\]
\end{thm}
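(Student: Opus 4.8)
\emph{Sufficiency.} I would prove this by adapting the argument of Theorem~\ref{thm:1Dsupregion} to $d$ transmitters. Suppose a $g$ as in the statement exists and fix $\epsilon>0$. Exactly as in the achievability half of Theorem~\ref{thm:1Diff}, first pass to a perturbation with strict slack: set $r_2(\mathbf{c})=\max\big(r(\mathbf{c})-\epsilon/2,\,0\big)$ and $g_{2,k}(\alpha)=\lambda\,g_k(\lambda\alpha)$ with a common $\lambda=1+\xi>1$ and $\xi$ small (depending on $\epsilon$ and $r(\mathbf{1})$); the dilation keeps $\int_0^\infty g_{2,k}=1$ automatically, avoids the degenerate case when some $g_k$ is supported near $0$, and gives $\mathbf{c}\cdot g_2\big(r_2(\mathbf{c})\big)\ge1+\xi$ whenever $r_2(\mathbf{c})>0$. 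Now take the superposition MRS code with parameter $g_2$ and a large block size $K$, a receiver $\mathbf{c}$, and a block $B_j$ in the relevant range. Let $S_k$ be the number of indices $i\le n$ at which Transmitter $k$ encodes $B_j$ and the symbol survives the capacity-$c_k$ channel, and $S=\sum_{k=1}^dS_k$. Since the transmitters' coding randomness and the $d$ erasure processes are independent, the $S_k$ are independent, so the estimate in the proof of Theorem~\ref{thm:1Dsupregion}, carried out for each transmitter and summed, yields $\mathbb{E}[S]\ge K\,\mathbf{c}\cdot g_2(Kj/n)-d\ge K(1+\xi)-d$ and $\mathrm{Var}[S]=\sum_k\mathrm{Var}[S_k]=O(K)$. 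Chebyshev's inequality then makes $S$ exceed $K(1+\xi/2)$ with probability $\ge1-\epsilon/2$ for $K$ large, and since the pooled symbols for $B_j$ are independent uniform random linear functionals of $B_j$, any $K+o(K)$ of them recover $B_j$ with error $\le\epsilon/2$; balancing constants as in Theorem~\ref{thm:1Dsupregion}, $B_j$ and each of its bits are recovered with error $\le\epsilon$, so $r$ is $\epsilon$-admissible.

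\emph{Necessity.} Suppose $r$ is achievable by superposition MRS code, so for each $\epsilon>0$ there is a superposition code, of block size $K_\epsilon$ and parameter $h^{(\epsilon)}$, making $r$ $\epsilon$-admissible. Fix $\mathbf{c}$ with $r(\mathbf{c})>0$, put $r_2=r(\mathbf{c})-\epsilon$, and for large $N$ let $B_J$ with $J=\lfloor Nr_2/K_\epsilon\rfloor$ be the last block lying inside $M_1^{\lfloor Nr_2\rfloor}$. Let $S^{(J)}$ be the total number of non-erased symbols Receiver $\mathbf{c}$ obtains for $B_J$ in the first $N$ channel uses; the decoder can use only these, and given them $B_J$ is conditionally uniform on an affine subspace of $\mathbb{F}_2^{K_\epsilon}$ whose co-dimension is the rank of the $S^{(J)}$ received (i.i.d.\ uniform) linear functionals, so by symmetry every bit of $B_J$ has error $\tfrac12\cdot\frac{2^{K_\epsilon}-\mathbb{E}[2^{\mathrm{rank}}]}{2^{K_\epsilon}-1}$. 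Requiring this to be $\le\epsilon$ forces both (a) $\mathbb{P}[\mathrm{rank}=K_\epsilon]\ge1-4\epsilon$, hence $\mathbb{E}[S^{(J)}]\ge K_\epsilon(1-4\epsilon)$, and (b) $\mathbb{E}\big[2^{-(S^{(J)}-K_\epsilon)}\mathbf{1}\{S^{(J)}\ge K_\epsilon\}\big]=O(\epsilon)$, hence $\mathbb{E}[S^{(J)}]\gtrsim\log(1/\epsilon)$. Summing the single-transmitter estimate of Theorem~\ref{thm:1Dsupregion} over transmitters gives $\mathbb{E}[S^{(J)}]\le K_\epsilon\,\mathbf{c}\cdot h^{(\epsilon)}\big(K_\epsilon(J-1)/N\big)+\Sigma(\mathbf{c})$, and $\mathbf{c}\cdot h^{(\epsilon)}(\alpha)\le\Sigma(\mathbf{c})/\alpha$ since each $h^{(\epsilon)}_k$ is a decreasing density of total mass $1$. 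Letting $N\to\infty$ (so $K_\epsilon(J-1)/N\uparrow r_2$ and, by left continuity, $\mathbf{c}\cdot h^{(\epsilon)}$ tends to its value at $r_2$), (b) with the upper bound forces $K_\epsilon\gtrsim\log(1/\epsilon)\to\infty$, while (a) with the upper bound gives
\[
\mathbf{c}\cdot h^{(\epsilon)}\big(r(\mathbf{c})-\epsilon\big)\ \ge\ 1-4\epsilon-\frac{\Sigma(\mathbf{c})}{K_\epsilon}\,,
\]
whose right side tends to $1$. Finally $h^{(\epsilon)}_k\le1/\alpha$ makes the family uniformly bounded on every $[\delta,\infty)$, so by Helly's selection theorem and a diagonal argument I would pass to a subsequence $\epsilon_n\to0$ with $h^{(\epsilon_n)}\to g$ pointwise on $(0,\infty)$, take the left-continuous version of the monotone decreasing limit, and obtain $\mathbf{c}\cdot g(r(\mathbf{c}))\ge1$ for all $\mathbf{c}$ with $r(\mathbf{c})>0$ (using left continuity of $\mathbf{c}\cdot g$ to pass from a dense set of $\alpha<r(\mathbf{c})$ to $\alpha=r(\mathbf{c})$). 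A routine adjustment then makes $g$ bounded (cap each $g_k$ at a large constant; this changes it only for $\alpha$ below every relevant value $r(\mathbf{c})$, so the inequality is untouched) and makes $\int_0^\infty g_k=1$ exactly (add a little mass on a bounded interval, which only enlarges $g_k$).

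\emph{Main obstacle.} Sufficiency is routine once one checks that the proof of Theorem~\ref{thm:1Dsupregion} survives replacing the single symbol count by the sum over transmitters. The real work is the necessity direction: converting ``the superposition code decodes the hardest block'' into the pointwise lower bound on $\mathbf{c}\cdot h^{(\epsilon)}$ near $r(\mathbf{c})$ hinges on the two facts (a) and (b)---that too few pooled useful symbols force a bit error bounded away from $0$, and that this also forces $K_\epsilon\to\infty$---and then one must control the limit $g$ despite the absence of a uniform bound near $\alpha=0$ and a possible escape of mass, which is exactly what the estimate $h^{(\epsilon)}_k\le1/\alpha$ together with the capping and mass-adding steps are for.
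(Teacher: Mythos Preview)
Your sufficiency argument matches the paper's: perturb $g$ to get strict slack, then run the single-transmitter estimate of Theorem~\ref{thm:1Dsupregion} for each transmitter and sum. The paper uses the shift-and-scale perturbation $g'_k(\alpha)=(1+\xi_k)g_k(\alpha+\epsilon/2)$ with coordinate-dependent $\xi_k$, while you use a common dilation $g_{2,k}(\alpha)=\lambda g_k(\lambda\alpha)$; both work.

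For necessity the paper is much terser than you: it simply asserts that ``at least $K$ received symbols is required to decode the block,'' deduces $\mathbf{c}\cdot g(r(\mathbf{c})-\epsilon)\ge1$ for the parameter $g$ of a single $\epsilon$-admissible superposition code, and then says ``as $g(\alpha)$ is left continuous, the proof can be completed by taking $\epsilon\to0$.'' Your version fills in two things the paper glosses over: (i) a genuine lower bound on the expected number of useful symbols from the bit-error requirement (your rank analysis; an alternative is Fano, giving $\mathbb{E}[\mathrm{rank}]\ge K_\epsilon(1-H(\epsilon))$ and hence $\mathbb{E}[S^{(J)}]\ge K_\epsilon(1-H(\epsilon))$), and (ii) the passage from $\epsilon$-dependent parameters $h^{(\epsilon)}$ to a single limiting $g$ via Helly selection. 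These are real improvements in rigor; the paper's one-line $\epsilon\to0$ tacitly treats $g$ as fixed.

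One small gap to watch in your final adjustment: capping $g_k$ at a large constant can lower $\mathbf{c}\cdot g(r(\mathbf{c}))$ below $1$ when $r(\mathbf{c})$ is small but positive, since nothing a priori bounds $r(\mathbf{c})$ away from $0$. The fix uses the $\eta$ in the definition of a rate-capacity function: whenever $r(\mathbf{c})>0$ one has $\Sigma(\mathbf{c})>\eta$, so if you redefine $g_k$ on $[0,\delta)$ to equal $g_k(\delta)$ and take $\delta$ small enough that $g_k(\delta)\ge1/\eta$ for each $k$ (possible since either $g_k$ is already bounded or $g_k(\delta)\to\infty$), the inequality persists for $r(\mathbf{c})<\delta$ as well; then add the missing mass on $[0,\delta)$ to make $\int g_k=1$, preserving monotonicity.
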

\begin{proof}
The ``if'' part is similar to the proof of achievability in Theorem
\ref{thm:1Diff}. Fix any $\epsilon>0$. Let $\eta>0$ such that $r(\mathbf{c})=0$
for $\Sigma(\mathbf{c})\le\eta$. Define $g':[0,\infty)\to[0,\infty)^{d}$
by
\[
g'_{k}(\alpha)=(1+\xi_{k})\cdot g_{k}(\alpha+\epsilon/2),
\]
where $\xi_{k}>0$ such that $\int_{0}^{\infty}g'_{k}(\alpha)d\alpha=1$.
Define a new rate-capacity function $r_{2}(\mathbf{c})=\max(r(\mathbf{c})-\epsilon/2,0)$.
We have, for any $\mathbf{c}$ with $r_{2}(\mathbf{c})>0$, 
\[
\mathbf{c}\cdot g'\left(r_{2}(c)\right)=\mathbf{c}\cdot(1+\xi)g'\left(r_{2}(\mathbf{c})\right)\ge1+\xi.
\]

Consider the multiple transmitter superposition MRS code with parameter
$g(\alpha)$. Applying Theorem \ref{thm:1Dsupregion} on each dimension,
the rate-capacity function $r_{2}(\mathbf{c})$ is $\epsilon/2$-admissible
by the code for sufficiently large block size, which implies that
$r(\mathbf{c})$ is $\epsilon$-admissible by the code.

For the ``only if'' part, let $\epsilon>0$ and consider a superposition
MRS code with block size $K$ and parameter $g(\alpha)$ in which
$r(\mathbf{c})$ is $\epsilon$-admissible. Consider the receiver
with capacities $\mathbf{c}$ which decode at rate $r(\mathbf{c})-\epsilon$.
Using similar arguments as in \ref{thm:1Dsupregion}, fixing any block
with sufficiently large index, the expected number of times when a
received symbol from Transmitter $k$ is encoded from the block can
be given by $c_{k}K\cdot g_{k}\left(r(\mathbf{c})-\epsilon\right)$.
As at least $K$ received symbols is required to decode the block,
we have
\[
\mathbf{c}\cdot g\left(r(\mathbf{c})-\epsilon\right)=\sum_{k}c_{k}g_{k}\left(r(\mathbf{c})-\epsilon\right)\ge1.
\]

As $g(\alpha)$ is left continuous, the proof can be completed by
taking $\epsilon\to0$.
\end{proof}

\section{General Non-optimality of Superposition Codes\label{sec:NonOpt}}

We have shown in Section \ref{sec:1D} that superposition codes are
optimal for single transmitter. However, in the multiple transmitter
setting, the superposition MRS codes are not optimal in general. We
will provide a counter example.
\begin{example}
\label{ex_NonOpt}Consider the two transmitter case. Given the block
size $K$ and $0<r<2/3$, the code is constructed by the following
process. For Transmitter $k$ ($k=1,2$), at time $i$, we generate
$X_{k,i}$ by taking a random linear combination of the bits in the
blocks $M'_{2\cdot\left\lceil i/(4K)\right\rceil -2+k}$, $M'_{4\cdot\left\lceil i/(4K)\right\rceil -1-k}$,
and $M'_{4\cdot\left\lceil i/(4K)\right\rceil +1-k}$, where $M'_{i}=M_{(i-1)K+1}^{iK}$.
\end{example}
Consider Transmitter 1, which generates $4K$ encoding bits from the
three blocks $M'_{2\cdot\left\lceil i/(4K)\right\rceil -1}$, $M'_{4\cdot\left\lceil i/(4K)\right\rceil -2}$,
and $M'_{4\cdot\left\lceil i/(4K)\right\rceil }$, with a total of
$3K$ bits. Hence, a receiver which can only receive from Transmitter
1 with channel capacity $3/4$ can decode the three blocks. When $\left\lceil i/(4K)\right\rceil =n$,
the blocks $M'_{2n-1}$, $M'_{4n-2}$, and $M'_{4n}$ can be decoded,
which covers all the blocks. The rate-capacity pair $\left(\frac{1}{2},(\frac{3}{4},0)\right)$
is admissible. Similar for $\left(\frac{1}{2},(0,\frac{3}{4})\right)$.

Consider a receiver which receives from Transmitter 1 and 2, each
with channel capacity $1/2$. It receives $2K$ bits encoding the
three blocks $M'_{2\cdot\left\lceil i/(4K)\right\rceil -1}$, $M'_{4\cdot\left\lceil i/(4K)\right\rceil -2}$,
$M'_{4\cdot\left\lceil i/(4K)\right\rceil }$, and also $2K$ bits
encoding the three blocks $M'_{2\cdot\left\lceil i/(4K)\right\rceil }$,
$M'_{4\cdot\left\lceil i/(4K)\right\rceil -3}$, $M'_{4\cdot\left\lceil i/(4K)\right\rceil -1}$.
When $n\stackrel{def}{=}\left\lceil i/(4K)\right\rceil =1$, the $4K$
bits encoding the blocks $M'_{1}$ to $M'_{4}$ are sufficient to
decode the blocks. When $n\ge2$, assume the blocks $M'_{m}$ for
$m\le4n-4$ are already decoded, then there are $2K$ bits encoding
the two blocks $M'_{4n-2}$ and $M'_{4n}$ ($M'_{2n-1}$ is already
decoded) which are sufficient to decode the blocks, and $2K$ bits
encoding the two blocks $M'_{4n-3}$ and $M'_{4n-1}$ which are sufficient
to decode the blocks.The rate-capacity pair $\left(1,(\frac{1}{2},\frac{1}{2})\right)$
is admissible.

However, the rate-capacity pairs $\left(\frac{1}{2},(\frac{3}{4},0)\right)$
and $\left(1,(\frac{1}{2},\frac{1}{2})\right)$ cannot be simultaneously
achieved by superposition MRS code. If it can be achieved by superposition
MRS code, by Theorem \ref{thm:MDsuperIff}, there is a monotonically
decreasing function $g:[0,\infty)\mapsto[0,\infty)^{2}$ satisfying
$\int_{0}^{\infty}g_{1}(\alpha)d\alpha=\int_{0}^{\infty}g_{2}(\alpha)d\alpha=1$
and $\frac{3}{4}g_{1}(\frac{1}{2})\ge1$, $\frac{1}{2}g_{1}(1)+\frac{1}{2}g_{2}(1)\ge1$.
As $g_{1}(1),g_{2}(1)\le1$, we have $g_{1}(1)=g_{2}(1)=1$, and $g_{1}(\alpha)=g_{2}(\alpha)=1$
when $\alpha\le1$, which contradicts with $\frac{3}{4}g_{1}(\frac{1}{2})\ge1$.

In the following sections, we will study some special cases in which
superposition codes are optimal.

\section{Some Useful Tools\label{sec:Tools}}

We will present some tools which are used to find the admissible region
in certain special cases.

For the sake of simplicity, we write 
\begin{eqnarray*}
J_{N\alpha}^{\infty}(Y_{\mathbf{c},1}^{N}) & = & H(Y_{\mathbf{c},1}^{N}|M_{1}^{\left\lfloor N\alpha\right\rfloor },E_{\mathbf{c},1}^{N},Q),\text{ and}\\
J_{N\alpha}^{N\beta}(Y_{\mathbf{c},1}^{N}) & = & I(M_{\left\lfloor N\alpha\right\rfloor +1}^{\left\lfloor N\beta\right\rfloor };Y_{\mathbf{c},1}^{N}|M_{1}^{\left\lfloor N\alpha\right\rfloor },E_{\mathbf{c},1}^{N},Q).
\end{eqnarray*}

We use the infinity sign ``$\infty$'' in $J_{N\alpha}^{\infty}(Y_{\mathbf{c},1}^{N})$
as the encoding symbols $X_{i}$ are encoded from $M_{1}^{\infty}$
and $Q$, and $Y_{\mathbf{c},i}$ can be determined by $X_{i}$ and
$E_{\mathbf{c},i}$. As a result, 
\[
H(Y_{\mathbf{c},1}^{N}|M_{1}^{\infty},E_{\mathbf{c},1}^{N},Q)=0,
\]
and therefore
\begin{eqnarray*}
J_{N\alpha}^{\infty}(Y_{\mathbf{c},1}^{N}) & = & H(Y_{\mathbf{c},1}^{N}|M_{1}^{\left\lfloor N\alpha\right\rfloor },E_{\mathbf{c},1}^{N},Q)\\
 & = & I(M_{\left\lfloor N\alpha\right\rfloor +1}^{\infty};Y_{\mathbf{c},1}^{N}|M_{1}^{\left\lfloor N\alpha\right\rfloor },E_{\mathbf{c},1}^{N},Q).
\end{eqnarray*}
(Note that the support of $M_{1}^{\infty}$ is uncountable. The above
equations only serve as the intuition behind the definition.)

The quantity $J_{N\alpha}^{N\beta}(Y_{\mathbf{c},1}^{N})$ roughly
corresponds to the amount of information in the first $N$ encoding
symbols dedicated to encode the interval of data $M_{\left\lfloor N\alpha\right\rfloor +1}^{\left\lfloor N\beta\right\rfloor }$.
Furthermore, we define
\begin{eqnarray*}
\overline{J}_{\alpha}^{\infty}(Y_{\mathbf{c}},T) & = & \frac{1}{T}\cdot\int_{0}^{T}J_{\alpha\left\lfloor e^{x}\right\rfloor }^{\infty}(Y_{\mathbf{c},1}^{\left\lfloor e^{x}\right\rfloor })\cdot e^{-x}dx,\\
\overline{J}_{\alpha}^{\beta}(Y_{\mathbf{c}},T) & = & \frac{1}{T}\cdot\int_{0}^{T}J_{\alpha\left\lfloor e^{x}\right\rfloor }^{\beta\left\lfloor e^{x}\right\rfloor }(Y_{\mathbf{c},1}^{\left\lfloor e^{x}\right\rfloor })\cdot e^{-x}dx.
\end{eqnarray*}

By observing $H(Y_{\mathbf{c},1}^{N}|E_{\mathbf{c},1}^{N})\le N\Sigma(\mathbf{c})$,
we know the limits are finite as 
\begin{eqnarray*}
\lefteqn{\frac{1}{T}\cdot\int_{0}^{T}J_{\alpha\left\lfloor e^{x}\right\rfloor }^{\beta\left\lfloor e^{x}\right\rfloor }(Y_{\mathbf{c},1}^{\left\lfloor e^{x}\right\rfloor })\cdot e^{-x}dx}\\
 & \le & \frac{1}{T}\cdot\int_{0}^{T}\Sigma(\mathbf{c})\left\lfloor e^{x}\right\rfloor \cdot e^{-x}dx\\
 & \le & \Sigma(\mathbf{c}),
\end{eqnarray*}
and thus
\begin{equation}
\overline{J}_{\alpha}^{\beta}(Y_{\mathbf{c}},T)\le\Sigma(\mathbf{c}).\label{eq:J_upperbound}
\end{equation}

Similarly, by considering $J_{\alpha\left\lfloor e^{x}\right\rfloor }^{\beta\left\lfloor e^{x}\right\rfloor }(Y_{\mathbf{c},1}^{\left\lfloor e^{x}\right\rfloor })\le(\beta-\alpha)\left\lfloor e^{x}\right\rfloor $,
we can obtain
\begin{equation}
\overline{J}_{\alpha}^{\beta}(Y_{\mathbf{c}},T)\le\beta-\alpha.\label{eq:J_upperbound_seg}
\end{equation}

If $(\mathbf{c},r)$ is $\epsilon$-admissible, then whenever $r\ge\beta>\alpha\ge0$,
$M_{\left\lfloor N\alpha\right\rfloor +1}^{\left\lfloor N(\beta-\epsilon)\right\rfloor }$
can be decoded using $Y_{\mathbf{c},1}^{N}$ for sufficiently large
$N$. By Fano's inequality (note that the case where $\beta-\epsilon\le\alpha$
is obvious),
\begin{eqnarray*}
\lefteqn{\liminf_{T\to\infty}\frac{1}{T}\cdot\int_{0}^{T}J_{\alpha\left\lfloor e^{x}\right\rfloor }^{\beta\left\lfloor e^{x}\right\rfloor }(Y_{\mathbf{c},1}^{\left\lfloor e^{x}\right\rfloor })\cdot e^{-x}dx}\\
 & \ge & \liminf_{T\to\infty}\frac{1}{T}\cdot\int_{0}^{T}(\beta-\alpha-\epsilon)(1-H(\epsilon))\left\lfloor e^{x}\right\rfloor \cdot e^{-x}dx\\
 & = & (\beta-\alpha-\epsilon)(1-H(\epsilon)),
\end{eqnarray*}
and therefore
\begin{equation}
\liminf_{T\to\infty}\overline{J}_{\alpha}^{\beta}(Y_{\mathbf{c}},T)\ge(\beta-\alpha-\epsilon)(1-H(\epsilon)).\label{eq:J_fano}
\end{equation}

Also it is clear that for $\gamma<\alpha<\beta$,
\begin{equation}
\overline{J}_{\gamma}^{\beta}(Y_{\mathbf{c}},T)=\overline{J}_{\gamma}^{\alpha}(Y_{\mathbf{c}},T)+\overline{J}_{\alpha}^{\beta}(Y_{\mathbf{c}},T)\label{eq:seg_split}
\end{equation}

The following lemma can be readily observed.
\begin{lem}
\label{lem:subadd}For $\mathbf{c}=\mathbf{c}_{1}+\mathbf{c}_{2}$,
$\mathbf{c}_{1},\mathbf{c}_{2}\ge0$, we have
\[
\overline{J}_{\alpha}^{\beta}(Y_{\mathbf{c}_{1}},T)\le\overline{J}_{\alpha}^{\beta}(Y_{\mathbf{c}},T)\le\overline{J}_{\alpha}^{\beta}(Y_{\mathbf{c}_{1}},T)+\overline{J}_{\alpha}^{\infty}(Y_{\mathbf{c}_{2}},T).
\]
\end{lem}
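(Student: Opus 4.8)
The statement is that for $\mathbf{c}=\mathbf{c}_1+\mathbf{c}_2$ with $\mathbf{c}_1,\mathbf{c}_2\ge 0$,
\[
\overline{J}_{\alpha}^{\beta}(Y_{\mathbf{c}_1},T)\le\overline{J}_{\alpha}^{\beta}(Y_{\mathbf{c}},T)\le\overline{J}_{\alpha}^{\beta}(Y_{\mathbf{c}_1},T)+\overline{J}_{\alpha}^{\infty}(Y_{\mathbf{c}_2},T).
\]
Since both $\overline{J}$ quantities are defined as fixed linear functionals (namely $\frac{1}{T}\int_0^T(\cdot)e^{-x}dx$) applied to the finite-blocklength quantities $J_{N\alpha}^{N\beta}(Y_{\mathbf{c},1}^N)$, it suffices to prove the inequalities at each fixed $N$:
\[
J_{N\alpha}^{N\beta}(Y_{\mathbf{c}_1,1}^{N})\le J_{N\alpha}^{N\beta}(Y_{\mathbf{c},1}^{N})\le J_{N\alpha}^{N\beta}(Y_{\mathbf{c}_1,1}^{N})+J_{N\alpha}^{\infty}(Y_{\mathbf{c}_2,1}^{N}),
\]
and then integrate. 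So the plan is to work entirely at a fixed blocklength and at the level of the mutual-information definitions, invoking the physical construction that the channels can be taken to be cascaded/degraded without changing any decoding probabilities — exactly as in the converse of Theorem~\ref{thm:1Diff}. The key structural fact to set up first is that, by that same reduction, we may realize $Y_{\mathbf{c}}$ so that $Y_{\mathbf{c}_1}$ is a (coordinatewise) degradation of $Y_{\mathbf{c}}$: from $X_n$ one first produces $Y_{\mathbf{c},n}$, and then independently erases extra coordinates to produce $Y_{\mathbf{c}_1,n}$, so that $X_n\to Y_{\mathbf{c},n}\to Y_{\mathbf{c}_1,n}$ and analogously $Y_{\mathbf{c}_2,n}$ is obtained from $Y_{\mathbf{c},n}$; moreover $Y_{\mathbf{c},n}$ is determined by the pair $(Y_{\mathbf{c}_1,n},Y_{\mathbf{c}_2,n})$ together with the erasure indicators.

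For the left inequality, write $J_{N\alpha}^{N\beta}(Y_{\mathbf{c}_1,1}^N)=I(M_{\lfloor N\alpha\rfloor+1}^{\lfloor N\beta\rfloor};Y_{\mathbf{c}_1,1}^N\mid M_1^{\lfloor N\alpha\rfloor},E_{\mathbf{c}_1,1}^N,Q)$; since $Y_{\mathbf{c}_1,1}^N$ is a function of $(Y_{\mathbf{c},1}^N,E_{\mathbf{c},1}^N,E_{\mathbf{c}_1,1}^N)$ and the extra erasures are independent of everything else, the data-processing inequality (conditioned on $E_{\mathbf{c},1}^N$, which only adds to the conditioning) gives $J_{N\alpha}^{N\beta}(Y_{\mathbf{c}_1,1}^N)\le J_{N\alpha}^{N\beta}(Y_{\mathbf{c},1}^N)$; the only point needing care is that the conditioning variables differ ($E_{\mathbf{c}_1,1}^N$ versus $E_{\mathbf{c},1}^N$), which is handled because, conditioned on $E_{\mathbf{c},1}^N$ and the extra-erasure variables, $E_{\mathbf{c}_1,1}^N$ is a deterministic function and independent of the messages. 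For the right inequality, use the chain rule: $J_{N\alpha}^{N\beta}(Y_{\mathbf{c},1}^N)=I(M_{\lfloor N\alpha\rfloor+1}^{\lfloor N\beta\rfloor};Y_{\mathbf{c}_1,1}^N,Y_{\mathbf{c}_2,1}^N\mid\cdots)=I(M;Y_{\mathbf{c}_1,1}^N\mid\cdots)+I(M;Y_{\mathbf{c}_2,1}^N\mid Y_{\mathbf{c}_1,1}^N,\cdots)$, bound the first term by $J_{N\alpha}^{N\beta}(Y_{\mathbf{c}_1,1}^N)$ (again up to the conditioning-variable bookkeeping), and bound the second term by $I(M_{\lfloor N\alpha\rfloor+1}^{\infty};Y_{\mathbf{c}_2,1}^N\mid M_1^{\lfloor N\alpha\rfloor},E_{\mathbf{c}_2,1}^N,Q)=J_{N\alpha}^{\infty}(Y_{\mathbf{c}_2,1}^N)$ by dropping the extra conditioning on $Y_{\mathbf{c}_1,1}^N$ (it is conditionally independent of $Y_{\mathbf{c}_2,1}^N$ given the messages and erasure pattern, in the cascaded realization) and enlarging the revealed message block from $M_{\lfloor N\alpha\rfloor+1}^{\lfloor N\beta\rfloor}$ to $M_{\lfloor N\alpha\rfloor+1}^{\infty}$, which only increases mutual information.

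The main obstacle — and the reason the lemma is phrased with $\overline{J}^{\infty}$ on the right rather than $\overline{J}^{\beta}$ — is precisely that the second term $I(M_{\lfloor N\alpha\rfloor+1}^{\lfloor N\beta\rfloor};Y_{\mathbf{c}_2,1}^N\mid Y_{\mathbf{c}_1,1}^N,M_1^{\lfloor N\alpha\rfloor},E_{\mathbf{c},1}^N,Q)$ carries extra conditioning on $Y_{\mathbf{c}_1,1}^N$, which one cannot simply drop while keeping the finite message segment $[\lfloor N\alpha\rfloor+1,\lfloor N\beta\rfloor]$; enlarging to $M^{\infty}$ is the device that lets the conditioning be removed safely, since $I(M_{\lfloor N\alpha\rfloor+1}^{\infty};Y_{\mathbf{c}_2,1}^N\mid Y_{\mathbf{c}_1,1}^N,M_1^{\lfloor N\alpha\rfloor},E_{\cdot,1}^N,Q)=I(M_{\lfloor N\alpha\rfloor+1}^{\infty};Y_{\mathbf{c}_2,1}^N\mid M_1^{\lfloor N\alpha\rfloor},E_{\cdot,1}^N,Q)$ once one observes that $Y_{\mathbf{c}_1,1}^N\to(M_1^{\infty},E_{\cdot,1}^N,Q)\to Y_{\mathbf{c}_2,1}^N$, i.e. $Y_{\mathbf{c}_1}$ and $Y_{\mathbf{c}_2}$ are conditionally independent given all the messages and the erasure indicators. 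Thus the bulk of the argument is checking these conditional-independence relations in the cascaded model and keeping the conditioning sets straight; once that is in place, both inequalities hold at every $N$, and applying $\frac{1}{T}\int_0^T(\cdot)e^{-x}dx$ with $N=\lfloor e^x\rfloor$ and using the additivity already noted for $\overline{J}$ completes the proof.
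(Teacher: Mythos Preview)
Your approach is essentially the paper's: couple the three receivers so that $Y_{\mathbf{c}}$ carries exactly the same information as the pair $(Y_{\mathbf{c}_1},Y_{\mathbf{c}_2})$, then apply the elementary inequality $I(X;Z)\le I(X,Y;Z)\le I(X;Z)+H(Y)$ (conditionally) at each $N$ and integrate. The paper does this in one line; your chain-rule expansion and conditional-independence bookkeeping are just a more explicit version of the same thing.

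There is one imprecision worth fixing in your coupling. You describe producing $Y_{\mathbf{c}_1}$ and $Y_{\mathbf{c}_2}$ from $Y_{\mathbf{c}}$ by ``independently erasing extra coordinates,'' and then assert that $Y_{\mathbf{c}}$ is recoverable from $(Y_{\mathbf{c}_1},Y_{\mathbf{c}_2})$ plus erasure indicators. These two claims are incompatible: if the extra erasures for $\mathbf{c}_1$ and $\mathbf{c}_2$ are independent, a coordinate of $Y_{\mathbf{c}}$ may survive while both $Y_{\mathbf{c}_1}$ and $Y_{\mathbf{c}_2}$ erase it, so $(Y_{\mathbf{c}_1},Y_{\mathbf{c}_2})$ is a strict degradation of $Y_{\mathbf{c}}$ and the chain-rule identity $J_{N\alpha}^{N\beta}(Y_{\mathbf{c},1}^N)=I(M_{\lfloor N\alpha\rfloor+1}^{\lfloor N\beta\rfloor};Y_{\mathbf{c}_1,1}^N,Y_{\mathbf{c}_2,1}^N\mid\cdots)$ fails. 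The paper instead takes the non-erasure events of receivers $\mathbf{c}_1$ and $\mathbf{c}_2$ to be \emph{disjoint} (so $E_{\mathbf{c},i}=1$ iff $E_{\mathbf{c}_1,i}=E_{\mathbf{c}_2,i}=1$); with $\mathbf{c}=\mathbf{c}_1+\mathbf{c}_2$ the marginals are correct and now $Y_{\mathbf{c}}$ and $(Y_{\mathbf{c}_1},Y_{\mathbf{c}_2})$ are informationally equivalent, so both inequalities follow directly. Once you replace your coupling with this one, the rest of your argument goes through unchanged.
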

\begin{proof}
As the quantities depend only on the marginal distributions of $\left\{ E_{\mathbf{c},i},X_{i},M_{i},Q\right\} $,
$\left\{ E_{\mathbf{c}_{1},i},X_{i},M_{i},Q\right\} $ and $\left\{ E_{\mathbf{c}_{2},i},X_{i},M_{i},Q\right\} $,
but not the joint distribution between $E_{\mathbf{c},i}$ and $E_{\mathbf{c}_{1},i}$
and so on. For the purpose of analysis, we assume the non-erasure
positions of Receiver $\mathbf{c}_{1}$ do not overlap with those
of Receiver $\mathbf{c}_{2}$, and erasure happens in Receiver $\mathbf{c}$
if an only if erasure happens in both Receiver $\mathbf{c}_{1}$ and
Receiver $\mathbf{c}_{2}$. Then $Y_{\mathbf{c},1}^{N}$ has the same
information as $\left(Y_{\mathbf{c}_{1},1}^{N},Y_{\mathbf{c}_{2},1}^{N}\right)$.
We can deduce from $I(X;Z)\le I(X,Y;Z)\le I(X;Z)+H(Y)$ that
\[
J_{N\alpha}^{N\beta}(Y_{\mathbf{c}_{1},1}^{N})\le J_{N\alpha}^{N\beta}(Y_{\mathbf{c},1}^{N})\le J_{N\alpha}^{N\beta}(Y_{\mathbf{c}_{1},1}^{N})+J_{N\alpha}^{\infty}(Y_{\mathbf{c}_{2},1}^{N}).
\]

The result follows.
\end{proof}
As the message is transmitted in a sequential manner, the received
symbols $Y_{\mathbf{c},1}^{N}$ should contain more information about
the older messages (the $M_{i}$ with smaller $i$) than newer messages,
and therefore the average information $(\beta-\alpha)^{-1}\cdot\overline{J}_{\alpha}^{\beta}(Y_{\mathbf{c}},T)$
should increase when $\alpha$ decrease. This property is proved in
the following lemma.
\begin{lem}
\label{lem:intervalext}Let $0\le\gamma\le\alpha<\beta$. We have
\[
\liminf_{T\to\infty}\left(\frac{1}{\beta-\gamma}\cdot\overline{J}_{\gamma}^{\beta}(Y_{\mathbf{c}},T)-\frac{1}{\beta-\alpha}\cdot\overline{J}_{\alpha}^{\beta}(Y_{\mathbf{c}},T)\right)\ge0.
\]
\end{lem}
\begin{proof}
Consider
\begin{eqnarray*}
J_{\gamma\left\lfloor e^{x}\right\rfloor }^{\beta\left\lfloor e^{x}\right\rfloor }(Y_{\mathbf{c},1}^{\left\lfloor e^{x}\right\rfloor }) & = & J_{\alpha\left\lfloor e^{x}\right\rfloor }^{\beta\left\lfloor e^{x}\right\rfloor }(Y_{\mathbf{c},1}^{\left\lfloor e^{x}\right\rfloor })+J_{\gamma\left\lfloor e^{x}\right\rfloor }^{\alpha\left\lfloor e^{x}\right\rfloor }(Y_{\mathbf{c},1}^{\left\lfloor e^{x}\right\rfloor })\\
 & \ge & J_{\alpha\left\lfloor e^{x}\right\rfloor }^{\beta\left\lfloor e^{x}\right\rfloor }(Y_{\mathbf{c},1}^{\left\lfloor e^{x}\right\rfloor })+J_{\gamma e^{x}}^{\alpha e^{x}}(Y_{\mathbf{c},1}^{\left\lfloor e^{x}\right\rfloor })-\alpha.
\end{eqnarray*}

After integrating the second term, we get, for any $x_{0}$,
\begin{eqnarray*}
\lefteqn{\int_{x_{0}}^{x_{0}+\log(\beta/\alpha)}J_{\gamma e^{x}}^{\alpha e^{x}}(Y_{\mathbf{c},1}^{\left\lfloor e^{x}\right\rfloor })dx}\\
 & \ge & \int_{x_{0}}^{x_{0}+\log(\beta/\alpha)}J_{\gamma e^{x}}^{\alpha e^{x}}(Y_{\mathbf{c},1}^{\left\lfloor e^{x_{0}}\right\rfloor })dx\\
 & = & \int_{x_{0}}^{x_{0}+\log(\beta/\alpha)}J_{\gamma e^{x}}^{\infty}(Y_{\mathbf{c},1}^{\left\lfloor e^{x_{0}}\right\rfloor })dx-\int_{x_{0}}^{x_{0}+\log(\beta/\alpha)}J_{\alpha e^{x}}^{\infty}(Y_{\mathbf{c},1}^{\left\lfloor e^{x_{0}}\right\rfloor })dx\\
 & = & \int_{x_{0}-\log(\alpha/\gamma)}^{x_{0}-\log(\alpha/\gamma)+\log(\beta/\alpha)}J_{\gamma e^{x+\log(\alpha/\gamma)}}^{\infty}(Y_{\mathbf{c},1}^{\left\lfloor e^{x_{0}}\right\rfloor })dx-\int_{x_{0}}^{x_{0}+\log(\beta/\alpha)}J_{\alpha e^{x}}^{\infty}(Y_{\mathbf{c},1}^{\left\lfloor e^{x_{0}}\right\rfloor })dx\\
 & = & \int_{x_{0}-\log(\alpha/\gamma)}^{x_{0}+\log(\beta\gamma/\alpha^{2})}J_{\alpha e^{x}}^{\infty}(Y_{\mathbf{c},1}^{\left\lfloor e^{x_{0}}\right\rfloor })dx-\int_{x_{0}}^{x_{0}+\log(\beta/\alpha)}J_{\alpha e^{x}}^{\infty}(Y_{\mathbf{c},1}^{\left\lfloor e^{x_{0}}\right\rfloor })dx\\
 & = & \int_{x_{0}-\log(\alpha/\gamma)}^{x_{0}}J_{\alpha e^{x}}^{\infty}(Y_{\mathbf{c},1}^{\left\lfloor e^{x_{0}}\right\rfloor })dx-\int_{x_{0}+\log(\beta\gamma/\alpha^{2})}^{x_{0}+\log(\beta/\alpha)}J_{\alpha e^{x}}^{\infty}(Y_{\mathbf{c},1}^{\left\lfloor e^{x_{0}}\right\rfloor })dx\\
 & = & \int_{x_{0}-\log(\alpha/\gamma)}^{x_{0}}J_{\alpha e^{x}}^{\infty}(Y_{\mathbf{c},1}^{\left\lfloor e^{x_{0}}\right\rfloor })dx-\int_{x_{0}-\log(\alpha/\gamma)}^{x_{0}}J_{\beta e^{x}}^{\infty}(Y_{\mathbf{c},1}^{\left\lfloor e^{x_{0}}\right\rfloor })dx\\
 & = & \int_{x_{0}-\log(\alpha/\gamma)}^{x_{0}}J_{\alpha e^{x}}^{\beta e^{x}}(Y_{\mathbf{c},1}^{\left\lfloor e^{x_{0}}\right\rfloor })dx\\
 & \ge & \int_{x_{0}-\log(\alpha/\gamma)}^{x_{0}}J_{\alpha e^{x}}^{\beta e^{x}}(Y_{\mathbf{c},1}^{\left\lfloor e^{x}\right\rfloor })dx\\
 & \ge & \int_{x_{0}-\log(\alpha/\gamma)}^{x_{0}}J_{\alpha\left\lfloor e^{x}\right\rfloor }^{\beta\left\lfloor e^{x}\right\rfloor }(Y_{\mathbf{c},1}^{\left\lfloor e^{x}\right\rfloor })dx-\alpha\log\frac{\alpha}{\gamma}.
\end{eqnarray*}

As a result,
\begin{eqnarray}
\lefteqn{\int_{x_{0}-\log(\beta/\alpha)}^{x_{0}}J_{\gamma\left\lfloor e^{x}\right\rfloor }^{\beta\left\lfloor e^{x}\right\rfloor }(Y_{\mathbf{c},1}^{\left\lfloor e^{x}\right\rfloor })dx}\nonumber \\
 & \ge & \int_{x_{0}-\log(\beta/\alpha)}^{x_{0}}J_{\alpha\left\lfloor e^{x}\right\rfloor }^{\beta\left\lfloor e^{x}\right\rfloor }(Y_{\mathbf{c},1}^{\left\lfloor e^{x}\right\rfloor })+\int_{x_{0}-\log(\beta/\alpha)}^{x_{0}}J_{\gamma e^{x}}^{\alpha e^{x}}(Y_{\mathbf{c},1}^{\left\lfloor e^{x}\right\rfloor })-\alpha\log\frac{\beta}{\alpha}\nonumber \\
 & \ge & \int_{x_{0}-\log(\beta/\alpha)}^{x_{0}}J_{\alpha\left\lfloor e^{x}\right\rfloor }^{\beta\left\lfloor e^{x}\right\rfloor }(Y_{\mathbf{c},1}^{\left\lfloor e^{x}\right\rfloor })+\int_{x_{0}-\log(\beta/\gamma)}^{x_{0}-\log(\beta/\alpha)}J_{\alpha\left\lfloor e^{x}\right\rfloor }^{\beta\left\lfloor e^{x}\right\rfloor }(Y_{\mathbf{c},1}^{\left\lfloor e^{x}\right\rfloor })dx-\alpha\log\frac{\beta}{\gamma}\nonumber \\
 & \ge & \int_{x_{0}-\log(\beta/\gamma)}^{x_{0}}J_{\alpha\left\lfloor e^{x}\right\rfloor }^{\beta\left\lfloor e^{x}\right\rfloor }(Y_{\mathbf{c},1}^{\left\lfloor e^{x}\right\rfloor })-\alpha\log\frac{\beta}{\gamma}.\label{eq:seg_int}
\end{eqnarray}

Fix any $T>0$. For each term in (\ref{eq:seg_int}), multiply it
with $e^{-x_{0}}$ and integrate it from 0 to $T$, we get 
\begin{eqnarray*}
\lefteqn{\int_{0}^{T}e^{-x_{0}}\int_{x_{0}-\log(\beta/\alpha)}^{x_{0}}J_{\gamma\left\lfloor e^{x}\right\rfloor }^{\beta\left\lfloor e^{x}\right\rfloor }(Y_{\mathbf{c},1}^{\left\lfloor e^{x}\right\rfloor })dxdx_{0}}\\
 & = & \int_{0}^{T}\int_{x}^{\min\left(T,x+\log(\beta/\alpha)\right)}e^{-x_{0}}J_{\gamma\left\lfloor e^{x}\right\rfloor }^{\beta\left\lfloor e^{x}\right\rfloor }(Y_{\mathbf{c},1}^{\left\lfloor e^{x}\right\rfloor })dx_{0}dx\\
 & \le & \int_{0}^{T}\int_{x}^{x+\log(\beta/\alpha)}e^{-x_{0}}J_{\gamma\left\lfloor e^{x}\right\rfloor }^{\beta\left\lfloor e^{x}\right\rfloor }(Y_{\mathbf{c},1}^{\left\lfloor e^{x}\right\rfloor })dx_{0}dx\\
 & = & \left(1-\frac{\alpha}{\beta}\right)\int_{0}^{T}e^{-x}J_{\gamma\left\lfloor e^{x}\right\rfloor }^{\beta\left\lfloor e^{x}\right\rfloor }(Y_{\mathbf{c},1}^{\left\lfloor e^{x}\right\rfloor })dx.
\end{eqnarray*}

And also
\begin{eqnarray*}
\lefteqn{\int_{0}^{T}e^{-x_{0}}\int_{x_{0}-\log(\beta/\gamma)}^{x_{0}}J_{\alpha\left\lfloor e^{x}\right\rfloor }^{\beta\left\lfloor e^{x}\right\rfloor }(Y_{\mathbf{c},1}^{\left\lfloor e^{x}\right\rfloor })dxdx_{0}}\\
 & = & \int_{0}^{T}\int_{x}^{\min\left(T,x+\log(\beta/\gamma)\right)}e^{-x_{0}}J_{\alpha\left\lfloor e^{x}\right\rfloor }^{\beta\left\lfloor e^{x}\right\rfloor }(Y_{\mathbf{c},1}^{\left\lfloor e^{x}\right\rfloor })dx_{0}dx\\
 & \ge & \int_{0}^{T-\log(\beta/\gamma)}\int_{x}^{x+\log(\beta/\gamma)}e^{-x_{0}}J_{\alpha\left\lfloor e^{x}\right\rfloor }^{\beta\left\lfloor e^{x}\right\rfloor }(Y_{\mathbf{c},1}^{\left\lfloor e^{x}\right\rfloor })dx_{0}dx\\
 & = & \left(1-\frac{\gamma}{\beta}\right)\int_{0}^{T-\log(\beta/\gamma)}e^{-x}J_{\alpha\left\lfloor e^{x}\right\rfloor }^{\beta\left\lfloor e^{x}\right\rfloor }(Y_{\mathbf{c},1}^{\left\lfloor e^{x}\right\rfloor })dx\\
 & \ge & \left(1-\frac{\gamma}{\beta}\right)\int_{0}^{T}e^{-x}J_{\alpha\left\lfloor e^{x}\right\rfloor }^{\beta\left\lfloor e^{x}\right\rfloor }(Y_{\mathbf{c},1}^{\left\lfloor e^{x}\right\rfloor })dx-\left(1-\frac{\gamma}{\beta}\right)(\beta-\alpha)\log\frac{\beta}{\gamma}.
\end{eqnarray*}

Therefore, by (\ref{eq:seg_int}),
\begin{eqnarray*}
\lefteqn{\left(1-\frac{\alpha}{\beta}\right)\int_{0}^{T}e^{-x}J_{\gamma\left\lfloor e^{x}\right\rfloor }^{\beta\left\lfloor e^{x}\right\rfloor }(Y_{\mathbf{c},1}^{\left\lfloor e^{x}\right\rfloor })dx}\\
 & \ge & \left(1-\frac{\gamma}{\beta}\right)\int_{0}^{T}e^{-x}J_{\alpha\left\lfloor e^{x}\right\rfloor }^{\beta\left\lfloor e^{x}\right\rfloor }(Y_{\mathbf{c},1}^{\left\lfloor e^{x}\right\rfloor })dx-\left(1-\frac{\gamma}{\beta}\right)(\beta-\alpha)\log\frac{\beta}{\gamma}-\int_{0}^{T}e^{-x_{0}}\alpha\log\frac{\beta}{\gamma}dx_{0}\\
 & = & \left(1-\frac{\gamma}{\beta}\right)\int_{0}^{T}e^{-x}J_{\alpha\left\lfloor e^{x}\right\rfloor }^{\beta\left\lfloor e^{x}\right\rfloor }(Y_{\mathbf{c},1}^{\left\lfloor e^{x}\right\rfloor })dx-\left(1-\frac{\gamma}{\beta}\right)(\beta-\alpha)\log\frac{\beta}{\gamma}-\left(1-e^{-T}\right)\alpha\log\frac{\beta}{\gamma}\\
 & \ge & \left(1-\frac{\gamma}{\beta}\right)\int_{0}^{T}e^{-x}J_{\alpha\left\lfloor e^{x}\right\rfloor }^{\beta\left\lfloor e^{x}\right\rfloor }(Y_{\mathbf{c},1}^{\left\lfloor e^{x}\right\rfloor })dx-\left(\frac{(\beta-\alpha)(\beta-\gamma)}{\beta}+\alpha\right)\cdot\log\frac{\beta}{\gamma}
\end{eqnarray*}

Multiply $\beta/\left(T\cdot(\beta-\alpha)(\beta-\gamma)\right)$
to both sides, we can obtain
\begin{eqnarray*}
\lefteqn{\frac{1}{\beta-\gamma}\cdot\frac{1}{T}\cdot\int_{0}^{T}e^{-x}J_{\gamma\left\lfloor e^{x}\right\rfloor }^{\beta\left\lfloor e^{x}\right\rfloor }(Y_{\mathbf{c},1}^{\left\lfloor e^{x}\right\rfloor })dx}\\
 & \ge & \frac{1}{\beta-\alpha}\cdot\frac{1}{T}\cdot\int_{0}^{T}e^{-x}J_{\alpha\left\lfloor e^{x}\right\rfloor }^{\beta\left\lfloor e^{x}\right\rfloor }(Y_{\mathbf{c},1}^{\left\lfloor e^{x}\right\rfloor })dx-\frac{1}{T}\cdot\left(1+\frac{\alpha\beta}{(\beta-\alpha)(\beta-\gamma)}\right)\cdot\log\frac{\beta}{\gamma}.
\end{eqnarray*}

Note that the second term vanishes when $T\to\infty$. The result
follows.
\end{proof}
We now proceed to prove an inequality on achievable rate-capacity
functions.
\begin{lem}
\label{lem:TwoSum}If a rate-capacity function $r(\mathbf{c})$ is
achievable, then for any $\mathbf{c}=\mathbf{c}_{1}+\mathbf{c}_{2}+...+\mathbf{c}_{n}$,
where $\mathbf{c}_{k}\ge0$ and $r(\mathbf{c})>r(\mathbf{c}_{k})$
for $k=1,...,n$, we have
\[
\sum_{k=1}^{n}\frac{\Sigma(\mathbf{c}_{k})-r(\mathbf{c}_{k})}{r(\mathbf{c})-r(\mathbf{c}_{k})}\ge1.
\]
\end{lem}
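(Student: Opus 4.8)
The plan is to argue through the averaged mutual information quantities of Section \ref{sec:Tools}. Fix $\epsilon>0$ and a code in which $r(\cdot)$ is $\epsilon$-admissible; write $r=r(\mathbf{c})$, $r_k=r(\mathbf{c}_k)$, $s_k=\Sigma(\mathbf{c}_k)$, and pass to a subsequence of $T\to\infty$ along which all the relevant $\overline{J}$'s converge, denoting the limits of $\overline{J}_0^\beta(Y_{\mathbf{c}_k},T)$ and $\overline{J}_0^\beta(Y_{\mathbf{c}},T)$ by $\phi_k(\beta)$ and $\phi(\beta)$. By (\ref{eq:seg_split}) each of $\phi_k,\phi$ is nondecreasing; Lemma \ref{lem:intervalext}, applied to small left and right increments, shows each is \emph{concave}; (\ref{eq:J_fano}) with the segment bound (\ref{eq:J_upperbound_seg}), after letting $\epsilon\to0$, pins $\phi_k(\beta)=\beta$ for $\beta\le r_k$ and $\phi(\beta)=\beta$ for $\beta<r$; and (\ref{eq:J_upperbound}) gives $\phi_k(\beta)\le s_k$. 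Concavity of $\phi_k$ together with $\phi_k(r_k)=r_k$ and $\phi_k(\beta)\le s_k$ then yields the key \emph{chord estimate}
\[
\overline{J}_\alpha^\beta(Y_{\mathbf{c}_k})=\phi_k(\beta)-\phi_k(\alpha)\le (s_k-r_k)\,\frac{\beta-\alpha}{\beta-r_k}\qquad(r_k\le\alpha<\beta),
\]
since on $\alpha\in[r_k,\beta]$ the map $\alpha\mapsto\phi_k(\beta)-\phi_k(\alpha)$ is convex, vanishes at $\alpha=\beta$, equals $\phi_k(\beta)-r_k\le s_k-r_k$ at $\alpha=r_k$, and hence lies below the chord through those two endpoints. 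This is the source of the denominators $r-r_k$ in the claim: letting $\beta\uparrow r$ turns the right-hand side into $(s_k-r_k)\,\tfrac{r-\alpha}{r-r_k}$.

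With the chord estimate in hand the aim is to bound $\overline{J}_t^r(Y_{\mathbf{c}})$ for $t$ just below $r$ and above $\max_k r_k$ from above by $(r-t)\sum_k\tfrac{s_k-r_k}{r-r_k}$, and to combine this with the Fano lower bound $\overline{J}_t^\beta(Y_{\mathbf{c}})\ge\beta-t-o(1)$ for $\beta<r$ coming from (\ref{eq:J_fano}); letting $\beta\uparrow r$, $t\uparrow\max_k r_k$ and $\epsilon\downarrow0$ then forces $\sum_k\tfrac{s_k-r_k}{r-r_k}\ge1$. The upper bound is to be produced by decomposing the combined receiver's information into the $n$ component receivers via Lemma \ref{lem:subadd}. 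Because that lemma keeps only one component with its finite right endpoint and sends the others to the ``$\infty$'' endpoint, the decomposition must be iterated and re-balanced; concretely one runs an induction on $n$, the base case $n=1$ being $\Sigma(\mathbf{c})\ge r$ (immediate from (\ref{eq:J_fano}) and (\ref{eq:J_upperbound})), peeling components off in increasing order of $r_k$ so that at each step the freshly exposed component is controlled on a shifted interval $[t',\beta]$ with $t'\ge r_k$ by the chord estimate, and the remaining combined receiver is handled via the inductive hypothesis applied to $\mathbf{c}-\mathbf{c}_k$ (or directly when $r(\mathbf{c}-\mathbf{c}_k)=r$, and with an appropriate limiting argument when $r(\mathbf{c}-\mathbf{c}_k)$ coincides with some $r_j$).

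The main obstacle is exactly the tension between Lemma \ref{lem:subadd} and Lemma \ref{lem:intervalext}: subadditivity wants to push the right endpoint of every component but one to $\infty$, which discards precisely the factor $\tfrac{r-t}{r-r_k}$ that the chord estimate supplies only when the right endpoint is held finite at $r$. A single application of Lemma \ref{lem:subadd} to $[t,r]$ gives only the weaker bound $r-\max_k r_k\le\sum_k(s_k-r_k)$, and merely iterating the two-component case is not enough either, since $\sum_k\tfrac{s_k-r_k}{r-r_k}\ge1$ is a genuinely $n$-ary inequality that is not implied by its pairwise instances once the intermediate rates $r(\mathbf{c}_S)$ are treated as free parameters — so the proof has to exploit the code structure globally. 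The delicate point is thus to carry out the $n$-fold split so that the $n$ chord factors accumulate additively rather than collapsing: to account for each component's leftover budget $s_k-r_k$ as an information density $\tfrac{s_k-r_k}{r-r_k}$ spread over the messages $[r_k,r)$ it can still influence, and to show that near message-level $r$ these densities must sum to at least the unit rate at which $\mathbf{c}$ is still decoding. Reconciling the per-interval subadditive estimates with the global constraint $\phi_k(\infty)\le s_k$ is, I expect, the technical heart of the argument.
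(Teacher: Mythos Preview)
Your setup is sound: the concavity of the limit profiles $\phi_k$ follows from Lemma~\ref{lem:intervalext}, and your chord estimate
\[
\overline{J}_{\alpha}^{\beta}(Y_{\mathbf{c}_k})\;\le\;(s_k-r_k)\,\frac{\beta-\alpha}{\beta-r_k}\qquad(r_k\le\alpha<\beta)
\]
is correct. The gap is in the step you yourself flag as unresolved. Your proposed induction peels off $\mathbf{c}_k$ and invokes the hypothesis on $\mathbf{c}-\mathbf{c}_k$, but the inductive statement is
\[
\sum_{j\ne k}\frac{s_j-r_j}{\,r(\mathbf{c}-\mathbf{c}_k)-r_j\,}\ge1,
\]
with $r(\mathbf{c}-\mathbf{c}_k)$ in the denominators rather than $r(\mathbf{c})$. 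Since in general $r(\mathbf{c}-\mathbf{c}_k)<r(\mathbf{c})$, these fractions are \emph{larger} than the ones you want, so the inductive inequality is weaker than what you need and does not close the loop. There is no evident way to trade the chord bound on the peeled component $\mathbf{c}_k$ against this shortfall, and your last paragraph is effectively an admission that the mechanism is missing.

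The paper's proof avoids induction altogether and resolves exactly the tension you describe by applying Lemma~\ref{lem:intervalext} not to the individual components $Y_{\mathbf{c}_k}$ but to the \emph{partial sums} $Y_{\mathbf{c}_1+\cdots+\mathbf{c}_{k-1}}$. After ordering $r_1\le\cdots\le r_n$ and setting $\mathbf{c}_0=0$, one combines Lemma~\ref{lem:subadd} (to peel $\mathbf{c}_k$ from $\mathbf{c}_1+\cdots+\mathbf{c}_k$) with Lemma~\ref{lem:intervalext} (to stretch the left endpoint of the remaining partial sum from $r_k$ down to $r_{k-1}$) into the single step
\[
\frac{1}{r-r_{k-1}}\,\overline{J}_{r_{k-1}}^{\,r}(Y_{\mathbf{c}_1+\cdots+\mathbf{c}_{k-1}})
\;-\;
\frac{1}{r-r_{k}}\,\overline{J}_{r_{k}}^{\,r}(Y_{\mathbf{c}_1+\cdots+\mathbf{c}_{k}})
\;+\;\frac{s_k-\overline{J}_0^{\,r_k}(Y_{\mathbf{c}_k})}{r-r_k}
\;\ge\;o(1).
\]
Summing over $k$ telescopes the first two terms: the $k=0$ boundary vanishes because $\mathbf{c}_0=0$, and the $k=n$ boundary is $\frac{1}{r-r_n}\overline{J}_{r_n}^{\,r}(Y_{\mathbf{c}})$, which Fano bounds below by $1-o(1)$. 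The factors $\frac{1}{r-r_k}$ thus arise from the interval-extension lemma applied to the partial sums, not from a chord estimate on the components. This is the missing idea: shift the left endpoint \emph{after} each peel, on the aggregated receiver, so that every peel happens at its own level $r_k$ and the scaling by $r-r_k$ is absorbed into the telescope rather than lost to the $\infty$-endpoint in Lemma~\ref{lem:subadd}.
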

\begin{proof}
Without loss of generality, assume $r(\mathbf{c}_{1})\le...\le r(\mathbf{c}_{n})<r(\mathbf{c})$.
Let $\mathbf{c}_{0}=0$. Consider a code in which $r(\mathbf{c})$
is $\epsilon-$admissible. Fix any $k\in\{1,...,n\}$. By (\ref{eq:J_upperbound}),
for any $T\ge0$,

\[
\overline{J}_{0}^{\infty}(Y_{\mathbf{c}_{k}},T)\le\Sigma(\mathbf{c}_{k}).
\]

By (\ref{eq:seg_split}),
\begin{eqnarray*}
\overline{J}_{r(\mathbf{c}_{k})}^{\infty}(Y_{\mathbf{c}_{k}},T) & = & \overline{J}_{0}^{\infty}(Y_{\mathbf{c}_{k}},T)-\overline{J}_{0}^{r(\mathbf{c}_{k})}(Y_{\mathbf{c}_{k}},T)\\
 & \le & \Sigma(\mathbf{c}_{k})-\overline{J}_{0}^{r(\mathbf{c}_{k})}(Y_{\mathbf{c}_{k}},T).
\end{eqnarray*}

Invoking Lemma \ref{lem:subadd}, we obtain
\begin{eqnarray*}
\overline{J}_{r(\mathbf{c}_{k})}^{r(\mathbf{c})}(Y_{\mathbf{c}_{1}+...+\mathbf{c}_{k-1}},T) & \ge & \overline{J}_{r(\mathbf{c}_{k})}^{r(\mathbf{c})}(Y_{\mathbf{c}_{1}+...+\mathbf{c}_{k}},T)-\overline{J}_{r(\mathbf{c}_{k})}^{\infty}(Y_{\mathbf{c}_{k}},T)\\
 & \ge & \overline{J}_{r(\mathbf{c}_{k})}^{r(\mathbf{c})}(Y_{\mathbf{c}_{1}+...+\mathbf{c}_{k}},T)-\Sigma(\mathbf{c}_{k})+\overline{J}_{0}^{r(\mathbf{c}_{k})}(Y_{\mathbf{c}_{k}},T),
\end{eqnarray*}

It can be deduced using Lemma \ref{lem:intervalext} that
\begin{eqnarray*}
\liminf_{T\to\infty}\Biggl(\frac{1}{r(\mathbf{c})-r(\mathbf{c}_{k-1})}\overline{J}_{r(\mathbf{c}_{k-1})}^{r(\mathbf{c})}(Y_{\mathbf{c}_{1}+...+\mathbf{c}_{k-1}},T)\\
-\frac{1}{r(\mathbf{c})-r(\mathbf{c}_{k})}\overline{J}_{r(\mathbf{c}_{k})}^{r(\mathbf{c})}(Y_{\mathbf{c}_{1}+...+\mathbf{c}_{k-1}},T)\Biggr) & \ge & 0,
\end{eqnarray*}
and therefore
\begin{eqnarray*}
\liminf_{T\to\infty}\Biggl(\frac{1}{r(\mathbf{c})-r(\mathbf{c}_{k-1})}\overline{J}_{r(\mathbf{c}_{k-1})}^{r(\mathbf{c})}(Y_{\mathbf{c}_{1}+...+\mathbf{c}_{k-1}},T)\\
-\frac{1}{r(\mathbf{c})-r(\mathbf{c}_{k})}\overline{J}_{r(\mathbf{c}_{k})}^{r(\mathbf{c})}(Y_{\mathbf{c}_{1}+...+\mathbf{c}_{k}},T)+\frac{\Sigma(\mathbf{c}_{k})-\overline{J}_{0}^{r(\mathbf{c}_{k})}(Y_{\mathbf{c}_{k}},T)}{r(\mathbf{c})-r(\mathbf{c}_{k})}\Biggr) & \ge & 0,
\end{eqnarray*}

Summing through $k=1,...,n$,
\begin{eqnarray*}
\liminf_{T\to\infty}\Biggl(\frac{1}{r(\mathbf{c})-r(0)}\overline{J}_{r(0)}^{r(\mathbf{c})}(Y_{\mathbf{c}_{0}},T)\\
-\frac{1}{r(\mathbf{c})-r(\mathbf{c}_{n})}\overline{J}_{r(\mathbf{c}_{n})}^{r(\mathbf{c})}(Y_{\mathbf{c}},T)+\sum_{k=1}^{n}\frac{\Sigma(\mathbf{c}_{k})-\overline{J}_{0}^{r(\mathbf{c}_{k})}(Y_{\mathbf{c}_{k}},T)}{r(\mathbf{c})-r(\mathbf{c}_{k})}\Biggr) & \ge & 0,
\end{eqnarray*}
and thus
\begin{eqnarray*}
\liminf_{T\to\infty}\Biggl(\sum_{k=1}^{n}\frac{\Sigma(\mathbf{c}_{k})-\overline{J}_{0}^{r(\mathbf{c}_{k})}(Y_{\mathbf{c}_{k}},T)}{r(\mathbf{c})-r(\mathbf{c}_{k})}\\
-\frac{1}{r(\mathbf{c})-r(\mathbf{c}_{n})}\overline{J}_{r(\mathbf{c}_{n})}^{r(\mathbf{c})}(Y_{\mathbf{c}},T)\Biggr) & \ge & 0,
\end{eqnarray*}

Using (\ref{eq:J_fano}), we have 
\[
\liminf_{T\to\infty}\overline{J}_{0}^{r(\mathbf{c}_{k})}(Y_{\mathbf{c}_{k}},T)\ge\left(r(\mathbf{c}_{k})-\epsilon\right)\left(1-H(\epsilon)\right),
\]
 and 
\[
\liminf_{T\to\infty}\overline{J}_{r(\mathbf{c}_{n})}^{r(\mathbf{c})}(Y_{\mathbf{c}},T)\ge\left(r(\mathbf{c})-r(\mathbf{c}_{n})-\epsilon\right)\left(1-H(\epsilon)\right).
\]

Hence,
\begin{eqnarray*}
\sum_{k=1}^{n}\frac{\Sigma(\mathbf{c}_{k})-\left(r(\mathbf{c}_{k})-\epsilon\right)\left(1-H(\epsilon)\right)}{r(\mathbf{c})-r(\mathbf{c}_{k})}\\
-\frac{\left(r(\mathbf{c})-r(\mathbf{c}_{n})-\epsilon\right)\left(1-H(\epsilon)\right)}{r(\mathbf{c})-r(\mathbf{c}_{n})} & \ge & 0.
\end{eqnarray*}

Let $\epsilon\to0$. We obtained the desired result as
\[
\sum_{k=1}^{n}\frac{\Sigma(\mathbf{c}_{k})-r(\mathbf{c}_{k})}{r(\mathbf{c})-r(\mathbf{c}_{k})}\ge1.
\]

\end{proof}
We will discuss some cases in which superposition coding is provably
optimal in the next section.

\section{On-Off Multicast Networks}

In this section, we consider networks in which there are $d$ transmitters
and $2^{d}-1$ receivers, each having a different set of transmitters
to which it is connected. Transmitter $k$ broadcasts the same information
to the receivers it is connected to at rate $w_{k}$. Each receiver
has to decode the information at a different rate. There is no erasure
in the network. We would like to formulate the criteria on the decoding
rates of the receivers in which sequential data transmission is possible.

We first convert the problem into the multi-transmitter MRS setting.
Without loss of generality, we assume $w_{k}\le1$ for $k=1,...,d$.
We may replace a connection with rate $w$ by an erasure channel with
capacity $w$. Note that the decoding requirement does not depend
on the joint distribution of erasure events of different receivers.
Therefore, the problem can be translated to $d$-transmitter MRS.
We confine our study to the capacity vectors $\mathbf{c}\in\{0,w_{1}\}\times...\times\{0,w_{d}\}$.

\begin{figure}[h]
\centering{}\includegraphics[scale=1.2]{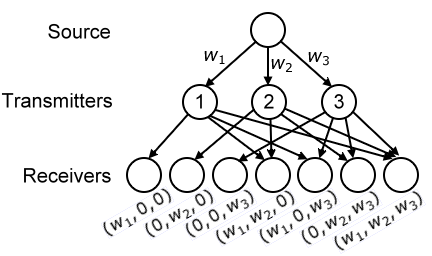}\caption{\label{Figure_GenNet}A three-transmitter on-off multicast network}
\end{figure}

\subsection{Non-optimality of superposition coding in general}

We can construct a network in which superposition coding is not optimal,
using a similar idea as in Example \ref{ex_NonOpt}.
\begin{example}
\label{ex_counter_3}The network contains 3 transmitters and 3 receivers,
where Transmitter 1 and 2 broadcast at rate 1, and Transmitter 3 broadcast
at rate 2. Receiver 1 is connected to Transmitter 1 and 2. Receiver
2 is connected to Transmitter 1 and 3. Receiver 3 is connected to
Transmitter 2 and 3. Non-superposition code can achieve the decoding
rate $3/2$ for Receiver 1, and $3$ for Receiver 2 and 3, which are
not achievable using superposition codes. Please refer to Subsection
\ref{sub:Non-superposition-codes} for the proof of achievability
and further discussions.
\end{example}
\begin{figure}[h]
\centering{}\includegraphics[scale=1.2]{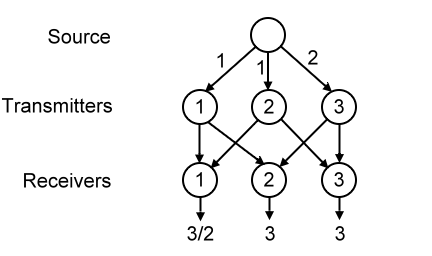}\caption{\label{Figure_Counter}The network in which superposition codes are
not optimal}
\end{figure}

Although superposition coding is not optimal for networks with 3 transmitters
in general, it is optimal for networks with 2 transmitters, which
will be shown in the following example.

\subsection{One-or-all on-off multicast network}
\begin{example}
[One-or-all on-off multicast network] There are $d$ transmitters
and $d+1$ receivers (numbered 0,...,$d$), where Transmitter $k$
broadcasts the same information to Receiver 0 and $k$ at rate of
$w_{k}$ bit/s for $k=1,...,d$. Receiver $k$ has to decode the data
at rate $r_{k}$ bit/s for $k=0,...,d$. We would like to characterize
the achievable region of $\left\{ r_{k}\right\} $ in terms of $\left\{ w_{k}\right\} $.
Figure \ref{Figure_2DAN} shows the network when $d=2$.
\end{example}
\begin{figure}[h]
\centering{}\includegraphics[scale=1.2]{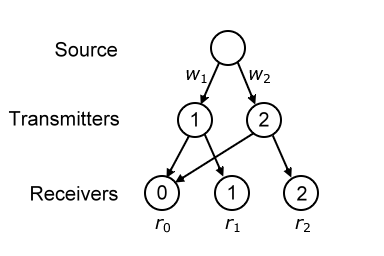}\caption{\label{Figure_2DAN}The two-transmitter one-or-all on-off multicast
network}
\end{figure}

The following theorem shows that superposition coding is optimal.
\begin{thm}
Superposition coding is optimal in the one-or-all on-off multicast
network, which has an achievable region
\[
r_{0}\ge0,\,0\le r_{k}\le w_{k}\text{ for }k=1,...,d,
\]
\[
\text{either }r_{0}\le\max(w_{k})\text{ or }\sum_{k=1}^{d}\frac{w_{k}-r_{k}}{r_{0}-r_{k}}\ge1.
\]
\end{thm}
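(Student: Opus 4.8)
The plan is to prove the two directions—achievability by superposition coding and the converse—separately, using the tools already assembled in Section \ref{sec:Tools}.

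\textbf{Achievability.} Given rates $r_0, r_1, \dots, r_d$ satisfying $0 \le r_k \le w_k$ and the stated disjunctive condition, I would exhibit a parameter function $g : [0,\infty) \to [0,\infty)^d$ meeting the hypotheses of Theorem \ref{thm:MDsuperIff}. The capacity vectors of interest are $\mathbf{c}_0 = (w_1, \dots, w_d)$ for Receiver $0$ and $\mathbf{e}_k w_k$ (the vector with $w_k$ in coordinate $k$, zero elsewhere) for Receiver $k$. The constraint $\mathbf{c} \cdot g(r(\mathbf{c})) \ge 1$ then reads $w_k g_k(r_k) \ge 1$ for each $k$, and $\sum_k w_k g_k(r_0) \ge 1$. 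In the case $r_0 \le \max_k w_k$, one can essentially ignore the coupling: pick $g_k$ to be $1/w_k$ on $[0, r_k]$ and drop to something small afterwards, padding to make $\int g_k = 1$; choosing $k^\ast$ achieving the max gives $w_{k^\ast} g_{k^\ast}(r_0) \ge w_{k^\ast} \cdot (1/w_{k^\ast}) = 1$ provided $g_{k^\ast}$ can be kept at $1/w_{k^\ast}$ up to $r_0$, which is possible since $\int_0^{r_0} (1/w_{k^\ast}) \,d\alpha = r_0/w_{k^\ast} \le 1$. In the case $\sum_k (w_k - r_k)/(r_0 - r_k) \ge 1$, I would use a two-level $g_k$: height $1/w_k$ on $[0, r_k]$, then a height $h_k$ on $(r_k, r_0]$, then $0$ (or a tiny tail) beyond; the area constraint forces $h_k (r_0 - r_k) \le 1 - r_k/w_k = (w_k - r_k)/w_k$, so $w_k h_k \le (w_k - r_k)/(r_0 - r_k)$, and we can take equality. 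Then $\sum_k w_k g_k(r_0) = \sum_k w_k h_k = \sum_k (w_k - r_k)/(r_0 - r_k) \ge 1$, as desired, while $w_k g_k(r_k) = w_k/w_k = 1$. One then applies Theorem \ref{thm:MDsuperIff}. The only delicacy is left-continuity and monotonicity of the piecewise-constant $g_k$ at the jump at $r_0$; since $g_k$ is decreasing this is fine, and one evaluates $g_k$ at $r_0$ using the left limit, which is exactly the larger value $h_k$ (or $1/w_k$)—I would be careful to set up the pieces so that the value \emph{at} $r_0$ is the one feeding the constraint.

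\textbf{Converse.} Suppose $r(\mathbf{c})$ is achievable. The bounds $0 \le r_k \le w_k$ are immediate: a receiver connected only to Transmitter $k$ at rate $w_k$ receives at most $w_k N$ bits in $N$ channel uses, so cannot decode more than $\approx w_k N$ message bits. For the main inequality, I would invoke Lemma \ref{lem:TwoSum} with $\mathbf{c} = \mathbf{c}_0 = (w_1,\dots,w_d)$ decomposed as $\mathbf{c}_0 = \sum_{k=1}^d \mathbf{e}_k w_k$, so that $\mathbf{c}_k := \mathbf{e}_k w_k$ has $\Sigma(\mathbf{c}_k) = w_k$ and $r(\mathbf{c}_k) = r_k$. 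Lemma \ref{lem:TwoSum} applies provided $r(\mathbf{c}_0) = r_0 > r_k$ for all $k$, and yields exactly $\sum_{k=1}^d (w_k - r_k)/(r_0 - r_k) \ge 1$. It remains to handle the excluded regime where $r_0 \le r_k$ for some $k$, or $r_0 \le \max_k w_k$ more generally: if $r_0 \le \max_k w_k$ there is nothing to prove since the disjunction is satisfied by its first clause; and since $r_k \le w_k \le \max_j w_j$, the failure $r_0 \le r_k$ of the Lemma's hypothesis already forces $r_0 \le \max_j w_j$, so we land in the trivial branch. Thus the converse splits cleanly according to whether $r_0 \le \max_k w_k$.

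\textbf{Main obstacle.} The conceptual work is essentially already done in Lemma \ref{lem:TwoSum} and Theorem \ref{thm:MDsuperIff}; the residual difficulty is purely bookkeeping—(i) verifying that the piecewise-constant $g_k$ I construct is genuinely bounded, decreasing, left-continuous, and integrates to exactly $1$ (the ``exactly'' may require an arbitrarily thin tail or a slight slackening, handled by an $\epsilon$-argument exactly as in the proof of Theorem \ref{thm:1Diff}), and (ii) confirming that evaluating $g_k$ at the boundary point $r_0$ gives the value I want, which hinges on the left-continuity convention. I would also double-check the edge cases $r_k = r_0$ (excluded) and $r_k = w_k$ (then $g_k$ is just $1/w_k$ on $[0, r_k]$ with no middle step, and the $k$-th term of the sum is $0$), to make sure the argument degrades gracefully. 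None of this is deep, but it is where a careless write-up would go wrong.
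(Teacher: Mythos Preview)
Your proposal is correct and follows essentially the same route as the paper: the converse invokes Lemma~\ref{lem:TwoSum} with the decomposition $\mathbf{c}_0=\sum_k \mathbf{e}_k w_k$, and the achievability constructs the two-level step function $g_k$ (height $1/w_k$ on $[0,r_k]$, height $(w_k-r_k)/(w_k(r_0-r_k))$ on $(r_k,r_0]$, zero beyond) and applies Theorem~\ref{thm:MDsuperIff}. Your bookkeeping worries about left-continuity and $\int g_k=1$ are unfounded in this particular case---the two-level construction integrates to exactly $1$ and is left-continuous as written---so no $\epsilon$-slackening or tail is needed.
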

\begin{proof}
The case where $r_{0}\le\max(w_{k})$ is trivial. We assume $r_{0}>w_{k}$
for all $k$. Without loss of generality, assume $r_{1}\le...\le r_{d}$,
then we have $r_{1}\le...\le r_{d}\le w_{d}<r_{0}$.

Among the inequalities in the proposed achievable region, $r_{k}\le w_{k}$
is obvious, and $\sum_{k=1}^{d}\frac{w_{k}-r_{k}}{r_{0}-r_{k}}\ge1$
is due to Lemma \ref{lem:TwoSum}. The converse follows.

We now show superposition MRS code can achieve the region. Assume
$\sum_{k=1}^{d}\frac{w_{k}-r_{k}}{r_{0}-r_{k}}\ge1$. Consider the
parameter
\[
g_{k}(\alpha)=\begin{cases}
\frac{1}{w_{k}} & \text{ when }\alpha\le r_{k}\\
\frac{w_{k}-r_{k}}{w_{k}(r_{0}-r_{k})} & \text{ when }r_{k}<\alpha\le r_{0}\\
0 & \text{ when }\alpha>r_{0}.
\end{cases}
\]

Note that
\[
\frac{w_{k}-r_{k}}{w_{k}(r_{0}-r_{k})}\le\frac{1}{w_{k}}
\]
due to $r_{0}\ge w_{k}$. Therefore $g(\alpha)$ is monotonically
decreasing along each dimension. It can be easily checked that $\int_{0}^{\infty}g_{k}(\alpha)d\alpha=1$.
It is left to check $\mathbf{c}\cdot g(r(\mathbf{c}))\ge1$.
\[
w_{k}\cdot g_{k}(r_{k})=w_{k}\cdot\frac{1}{w_{k}}=1,
\]
\begin{eqnarray*}
\lefteqn{\sum_{k=1}^{d}w_{k}\cdot g_{k}(r_{0})}\\
 & = & \sum_{k=1}^{d}w_{k}\cdot\frac{w_{k}-r_{k}}{w_{k}(r_{0}-r_{k})}\\
 & = & \sum_{k=1}^{d}\frac{w_{k}-r_{k}}{r_{0}-r_{k}}\ge1.
\end{eqnarray*}

By Theorem \ref{thm:MDsuperIff}, the region is achievable by superposition
MRS code.\end{proof}
\begin{rem*}
The achievable region in the MRS setting is different from that in
multilevel diversity coding, which admits the larger region
\[
r_{0}\ge0,\,0\le r_{k}\le w_{k}\text{ for }k=1,...,d,\, r_{0}+\sum_{k=1}^{d}r_{k}-\max_{k=1,...,d}(r_{k})\le\sum_{k=1}^{d}w_{k}.
\]

\end{rem*}

\subsection{Non-superposition codes\label{sub:Non-superposition-codes}}

In Example \ref{ex_counter_3}, we presented a network in which superposition
codes are non-optimal. Even though non-superposition codes are used,
the tools described in Section \ref{sec:Tools} may still be used.
We are going to prove the achievable region of a generalized version
of Example \ref{ex_counter_3}.
\begin{example}
\label{ex_3Dnonsuper}There are $3$ transmitters and $3$ receivers
numbered 1,2,$3$. Receiver 1 receives from Transmitter 1 and 2, and
wish to decode at $r_{1}$. Receiver 2 receives from Transmitter 1
and 3, and wish to decode at $r_{2}\ge r_{1}$. Receiver 3 receives
from Transmitter 2 and 3, and wish to decode at $r_{3}=r_{2}$.
\end{example}
\begin{figure}[h]
\centering{}\includegraphics[scale=1.2]{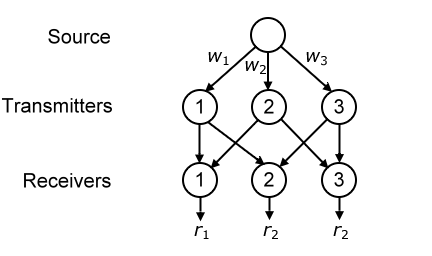}\caption{\label{Figure_3Dnonsuper}The network specified in Example \ref{ex_3Dnonsuper}}
\end{figure}

The achievable region is given by the following theorem.
\begin{thm}
The achievable region in Example \ref{ex_3Dnonsuper} can be given
by
\begin{eqnarray*}
w_{1}+w_{2} & \ge & r_{1}\\
w_{1}+w_{3} & \ge & r_{2}\\
w_{2}+w_{3} & \ge & r_{2}\\
w_{1}+w_{2}+w_{3}\cdot\frac{r_{2}-r_{1}}{r_{2}} & \ge & r_{2}.
\end{eqnarray*}
\end{thm}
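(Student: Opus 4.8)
The plan is to prove the converse (necessity of the four inequalities) and the achievability (a non-superposition code meeting them) separately. For the converse, the first three inequalities are the cut/capacity constraints: Receiver~$k$ has total incoming capacity equal to the sum of the two $w$'s it is connected to, and it must decode at its required rate, so by Fano's inequality (applied exactly as in \eqref{eq:J_fano} together with the trivial bound \eqref{eq:J_upperbound}) the decoding rate cannot exceed the sum of the relevant $w_k$. The fourth inequality is the interesting one; I would obtain it from Lemma~\ref{lem:TwoSum}. Writing $\mathbf{c}$ for the capacity vector of the ``virtual'' receiver that hears all three transmitters (capacities $(w_1,w_2,w_3)$, decoding rate $r_2$, which is admissible since a receiver with more capacity can do at least as well), I would split $\mathbf{c}=\mathbf{c}_1+\mathbf{c}_2$ where $\mathbf{c}_1=(w_1,w_2,0)$ is Receiver~1's vector (rate $r_1$) and $\mathbf{c}_2=(0,0,w_3)$ (rate $0$, since a single transmitter at rate $w_3\le 1$ supports rate $0$). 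Lemma~\ref{lem:TwoSum} then gives
\[
\frac{(w_1+w_2)-r_1}{r_2-r_1}+\frac{w_3-0}{r_2-0}\ge1,
\]
which rearranges to $w_1+w_2+w_3\cdot\frac{r_2-r_1}{r_2}\ge r_2$. (I should double-check the edge case $r_1=r_2$, where the fourth inequality degenerates to $w_1+w_2\ge r_2$; this needs to be checked directly against Fano, since Lemma~\ref{lem:TwoSum} requires strict inequalities $r(\mathbf c)>r(\mathbf c_k)$.)

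For achievability I would exhibit an explicit time-sharing / block construction generalizing Example~\ref{ex_counter_3}, where the four inequalities are tight. The idea is to allocate the channel uses of each transmitter among ``private'' streams (blocks that one receiver decodes on its own from a single transmitter, fed by random linear combinations at the right local rate) and a ``shared'' stream carried jointly by the pair of transmitters that a given receiver hears, exactly the mechanism that let $(\tfrac12,(\tfrac34,0))$ and $(1,(\tfrac12,\tfrac12))$ coexist in Example~\ref{ex_NonOpt}. Concretely: Transmitter~3's bandwidth $w_3$ splits into a portion used to help Receiver~2 and a portion to help Receiver~3; the slack $w_3\cdot\frac{r_2-r_1}{r_2}$ in the fourth inequality is precisely the amount of Transmitter~3's rate that, after Receiver~1 has already decoded $M_1^{\lfloor N r_1\rfloor}$ by itself, still needs to be spent so that the combined receiver reaches $r_2$. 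I would organize the data into blocks of $K$ bits, interleave the streams within windows whose length grows geometrically (so the asymptotic rates are exact), and verify by a counting argument — each receiver, having decoded all earlier blocks, receives at least $K$ fresh random linear symbols about each new block — that all three receivers decode at their nominal rates with vanishing error probability, invoking the $K+o(K)$ decodability of random linear projections as in the proof of Theorem~\ref{thm:1Dsupregion}.

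The main obstacle is the achievability side: designing the interleaving so that, simultaneously for all three receivers, a correctly ``causal'' decoding schedule exists — i.e.\ at the time any given block is due, each receiver that needs it has accumulated $\ge K$ useful encoding symbols about it from exactly the transmitters it hears. Because Receiver~2 and Receiver~3 both use Transmitter~3 but on disjoint block families (as in Example~\ref{ex_counter_3}), and Receiver~1 uses Transmitters~1 and~2, one must carve up each transmitter's symbol stream into sub-streams whose rates satisfy a small linear program, and the four inequalities are exactly the feasibility conditions of that LP; I expect the bulk of the work is setting up this LP, checking that tightness in the four inequalities makes it feasible (with the extremal solution dictating the block pattern), and confirming the sequential-decoding order is consistent across all receivers. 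The probabilistic/Fano bookkeeping is then routine, mirroring the single-transmitter arguments already in the paper.
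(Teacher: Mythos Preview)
Your converse is correct and in fact slicker than the paper's: the paper unfolds the entropy manipulations directly (invoking Lemmas~\ref{lem:subadd} and~\ref{lem:intervalext} and the bounds \eqref{eq:J_upperbound}, \eqref{eq:J_fano}, \eqref{eq:seg_split} by hand), whereas you recognize the fourth inequality as an instance of Lemma~\ref{lem:TwoSum} with $\mathbf{c}=(w_1,w_2,w_3)$, $\mathbf{c}_1=(w_1,w_2,0)$, $\mathbf{c}_2=(0,0,w_3)$ and $r(\mathbf{c})=r_2$, $r(\mathbf{c}_1)=r_1$, $r(\mathbf{c}_2)=0$. Your edge-case remark for $r_1=r_2$ is also correct.

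The achievability side, however, has a genuine gap. What you describe --- carving each transmitter's symbol stream into sub-streams whose rates solve an LP, with ``Transmitter~3's bandwidth $w_3$ split into a portion used to help Receiver~2 and a portion to help Receiver~3'' so that Receivers~2 and~3 use Transmitter~3 ``on disjoint block families'' --- is rate-splitting of the \emph{channel uses}, which is superposition coding. The paper's closing remark shows superposition only reaches the strictly smaller region with fourth constraint $w_1+w_2+2w_3\tfrac{r_2-r_1}{r_2}\ge 2r_2-r_1$. In particular, splitting Transmitter~3 between the two high-rate receivers is exactly what \emph{not} to do: in the optimal code Transmitter~3 broadcasts a single stream that both Receiver~2 and Receiver~3 consume in full.

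The missing idea is to split the \emph{message blocks}, not the channel time. Each block $B_i$ is cut into a first part $B_{1,i}$ of $\gamma K$ bits (with $\gamma=w_3/r_2$) and a remainder $B_{2,i}$. Transmitter~3 sends random linear combinations of the $B_{1,\cdot}$ parts grouped into super-blocks at rate~$r_2$. Transmitters~1 and~2 each send random linear combinations of the \emph{union} of (i) the $B_{1,\cdot}$ parts grouped at the slower rate~$r_1$ and (ii) the $B_{2,\cdot}$ parts grouped at rate~$r_2$. Receiver~2, decoding at rate~$r_2$, already knows component~(i) (those blocks are in its past) and can subtract it off, so Transmitter~1's output becomes a pure encoding of the $B_{2,\cdot}$ parts at rate~$r_2$; together with Transmitter~3 this recovers the full rate-$r_2$ block, and symmetrically for Receiver~3. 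Receiver~1 sees $(w_1+w_2)L$ projections per window of a set of $r_1\gamma L + r_2(1-\gamma)L = \bigl(r_2 - w_3\tfrac{r_2-r_1}{r_2}\bigr)L$ bits, and the fourth inequality is exactly what makes this decodable. This mixing of two differently-paced block families inside one linear combination is the non-superposition ingredient, and it does not fit into the per-transmitter sub-stream LP you sketched.
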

\begin{proof}
[Proof of converse] The first three inequalities clearly hold. For
the last inequality, consider a code in which the rate requirements
are $\epsilon-$admissible. Receiver 1 can decode at rate $r_{1}$.
By (\ref{eq:J_fano}), 
\[
\underline{J}_{0}^{r_{1}}(Y_{(w_{1},w_{2},0)})\ge\left(r_{1}-\epsilon\right)\left(1-H(\epsilon)\right).
\]

Invoking (\ref{eq:J_upperbound}) and Lemma \ref{lem:subadd}, we
can obtain
\begin{eqnarray}
\underline{J}_{0}^{r_{1}}(Y_{(w_{1},0,0)})+w_{2} & \ge & \underline{J}_{0}^{r_{1}}(Y_{(w_{1},0,0)})+\overline{J}_{0}^{\infty}(Y_{(0,w_{2},0)})\nonumber \\
 & \ge & \underline{J}_{0}^{r_{1}}(Y_{(w_{1},w_{2},0)})\nonumber \\
 & \ge & \left(r_{1}-\epsilon\right)\left(1-H(\epsilon)\right).\label{eq:3x3l2_eq1}
\end{eqnarray}

Receiver 2 can decode at rate $r_{2}$. By (\ref{eq:J_fano}), 
\[
\underline{J}_{r_{1}}^{r_{2}}(Y_{(w_{1},0,w_{3})})\ge\left(r_{2}-r_{1}-\epsilon\right)\left(1-H(\epsilon)\right).
\]

Invoking Lemma \ref{lem:subadd}, we can obtain
\begin{eqnarray*}
\overline{J}_{r_{1}}^{\infty}(Y_{(w_{1},0,0)})+\underline{J}_{r_{1}}^{r_{2}}(Y_{(0,0,w_{3})}) & \ge & \underline{J}_{r_{1}}^{r_{2}}(Y_{(w_{1},0,w_{3})})\\
 & \ge & \left(r_{2}-r_{1}-\epsilon\right)\left(1-H(\epsilon)\right).
\end{eqnarray*}

By Lemma \ref{lem:intervalext} and (\ref{eq:J_upperbound}),
\[
w_{3}\ge\underline{J}_{0}^{r_{2}}(Y_{(0,0,w_{3})})\ge\frac{r_{2}}{r_{2}-r_{1}}\cdot\underline{J}_{r_{1}}^{r_{2}}(Y_{(0,0,w_{3})}),
\]
and therefore
\begin{equation}
\overline{J}_{r_{1}}^{\infty}(Y_{(w_{1},0,0)})+\frac{r_{2}-r_{1}}{r_{2}}\cdot w_{3}\ge\left(r_{2}-r_{1}-\epsilon\right)\left(1-H(\epsilon)\right).\label{eq:3x3l2_eq2}
\end{equation}

By (\ref{eq:J_upperbound}) and (\ref{eq:seg_split}),
\begin{eqnarray}
w_{1} & \ge & \overline{J}_{0}^{\infty}(Y_{(w_{1},0,0)})\nonumber \\
 & \ge & \underline{J}_{0}^{r_{1}}(Y_{(w_{1},0,0)})+\overline{J}_{r_{1}}^{\infty}(Y_{(w_{1},0,0)}).\label{eq:3x3l2_eq3}
\end{eqnarray}

Adding (\ref{eq:3x3l2_eq1}), (\ref{eq:3x3l2_eq2}) and (\ref{eq:3x3l2_eq3}),
\[
w_{1}+w_{2}+\frac{r_{2}-r_{1}}{r_{2}}\cdot w_{3}\ge\left(r_{2}-2\epsilon\right)\left(1-H(\epsilon)\right).
\]

The proof can be completed by taking $\epsilon\to0$.
\end{proof}
\[
\,
\]

\begin{proof}
[Proof of achievability] We will describe a coding scheme which
can achieve the proposed region. When $r_{1}=r_{2}$, we may use all
the transmitters to transmit at rate $r_{1}$. When $w_{3}\ge r_{2}$,
then we can use Transmitter 3 alone to transmit the message at rate
$r_{2}$, and Transmitter 1 together with Transmitter 2 to transmit
at rate $r_{1}$. Therefore we assume $r_{1}<r_{2}$ and $w_{3}<r_{2}$.

The code is specified by the block size $K$, the super-block size
$L>K$, and the parameter $\gamma$. Divide the message $\left\{ M_{i}\right\} _{i=1,2,...}$
into blocks of $K$ bits $B_{i}=M_{(i-1)K+1}^{iK}$. Further divide
each block $B_{i}$ into two sub-blocks $B_{1,i}$ and $B_{2,i}$,
where $B_{1,i}$ contains the first $\gamma K$ bits of the block,
where $\gamma$ is taken to be $w_{3}/r_{2}$ in this case, and $B_{2,i}$
contains the rest of the bits. Group the sub-blocks into super-blocks
by 
\[
S_{a,b,i}=\left(B_{b,(i-1)\cdot r_{a}\cdot L/K+1},...,B_{b,i\cdot r_{a}\cdot L/K}\right)
\]
for $a,b\in\{1,2\}$. Assume $r_{1}\cdot L/K$ and $r_{2}\cdot L/K$
are integers. Each super-block $S_{a,1,i}$ contains $r_{a}\cdot\gamma L$
bits, and each super-block $S_{a,2,i}$ contains $r_{a}\cdot(1-\gamma)L$
bits.

\begin{figure}[h]
\centering{}\includegraphics[scale=1.4]{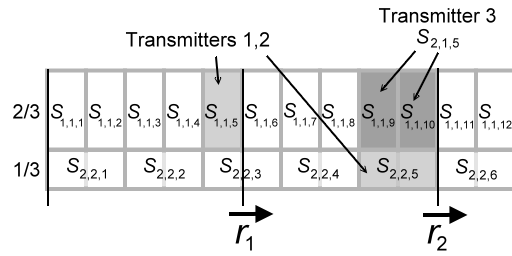}\caption{\label{Figure_nonsupercode}The super-blocks when $\gamma=2/3$ and
$r_{2}=2r_{1}$}
\end{figure}

At time $i$, Transmitter 1 encodes the super-blocks $S_{1,1,\left\lceil i/L\right\rceil }$
and $S_{2,2,\left\lceil i/L\right\rceil }$ using random linear projection
(concatenate the bits in the super-blocks and transmit a random projection
of the resultant vector). Transmitter 2 uses the same encoding scheme
as Transmitter 1. Transmitter 3 encodes the super-block $S_{2,1,\left\lceil i/L\right\rceil }$
using random linear projection.

To see why Receiver 2 can decode at rate $r_{2}$, assume that at
time $jL$, the super-blocks $S_{2,1,k}$ and $S_{2,2,k}$ are already
decoded for $k=1,...,j$. During the time interval $jL+1,...,(j+1)L$,
Receiver 2 will receive $w_{3}L$ bits from Transmitter 3, and $w_{1}L$
bits from Transmitter 1. For a bit encoded by Transmitter 1 at time
$i$, as the super-block $S_{1,1,\left\lceil i/L\right\rceil }$ is
already decoded (for $j$ large enough), it can be treated as a random
linear projection of $S_{2,2,\left\lceil i/L\right\rceil }$. Therefore
all together we have $w_{3}L$ projections of $S_{2,1,j+1}$, and
$w_{1}L$ projections of $S_{2,2,j+1}$. By definition of $\gamma$,

\[
w_{3}L=r_{2}\gamma L.
\]

From the assumption $w_{1}+w_{3}\ge r_{2}$,
\begin{eqnarray*}
w_{1}L & \ge & (r_{2}-w_{3})L\\
 & = & r_{2}(1-\gamma)L.
\end{eqnarray*}

Therefore both $S_{2,1,j+1}$ and $S_{2,2,j+1}$ can be decoded at
time $(j+1)K/r_{2}$. By induction, Receiver 2 can decode at rate
$r_{2}$. (Note that we may assume the first few blocks are decoded,
as we may allocate any extra amount of time to transmit them without
affecting the asymptotic behavior of the code). Similar result holds
for Receiver 3.

For Receiver 1, consider the time interval $jL+1,...,(j+1)L$. In
this time interval, Receiver 2 receives $w_{1}L$ bits from Transmitter
1, and $w_{2}L$ bits from Transmitter 2. All together there are $(w_{1}+w_{2})L$
projections of $S_{1,1,j+1}$ and $S_{2,2,j+1}$. The number of bits
in $S_{1,1,j+1}$ and $S_{2,2,j+1}$ is given by
\begin{eqnarray*}
r_{1}\cdot\gamma L+r_{2}\cdot(1-\gamma)L & = & \left(\frac{r_{1}w_{3}}{r_{2}}+r_{2}-w_{3}\right)\cdot L\\
 & = & \left(r_{2}-w_{3}\cdot\frac{r_{2}-r_{1}}{r_{2}}\right)\cdot L
\end{eqnarray*}
which is smaller than $(w_{1}+w_{2})L$ by the assumption $w_{1}+w_{2}+w_{3}\cdot\frac{r_{2}-r_{1}}{r_{2}}\ge r_{2}$.
Therefore Receiver 1 is able to decode at rate $r_{1}$.
\end{proof}
\[
\]

\begin{rem*}
Superposition codes can achieve a smaller region given by
\begin{eqnarray*}
w_{1}+w_{2} & \ge & r_{1}\\
w_{1}+w_{3} & \ge & r_{2}\\
w_{2}+w_{3} & \ge & r_{2}\\
w_{1}+w_{2}+2w_{3}\cdot\frac{r_{2}-r_{1}}{r_{2}} & \ge & 2r_{2}-r_{1}.
\end{eqnarray*}

\end{rem*}

\section{Conclusion and Discussions}

In this report, we have investigated the achievable regions and coding
schemes for multi-rate data transmission. We have shown that superposition
codes are optimal for the single transmitter setting. However, in
the multiple transmitter setting, there are some non-superposition
codes which outperform superposition codes.

Our results can be applied in various scenarios which requires the
messages to be decoded sequentially, for example, the broadcast streaming
of video. The multi-rate sequential data transmission setting can
also be applied on messages divided into several levels of importance.
For example, in the transmission of an interlaced image file, the
data corresponding to the low-resolution part is transmitted before
the data corresponding to the high-resolution part. A sequential code
can ensure that, even when the receiver has variable channel condition,
the low-resolution part is decoded first, and therefore the receiver
can display the image with lower resolution before all data are received.
If the connection might be stopped at any time, using a sequential
code can ensure the received message forms a continuous segment from
the beginning instead of fragmented data as in Fountain codes. In
the example of image transmission, if the connection is lost in the
middle of the transmission, the receiver can still decode a low-resolution
version of the image.

In Example \ref{ex_3Dnonsuper}, we have studied a particular 3-transmitter
network. Further investigation on the general 3-transmitter network,
and the next step, $N$-transmitter networks, may be carried out in
the future. Ultimately, we may consider general networks of interconnected
nodes instead of only two layers of nodes (transmitters and receivers).
Another direction is to find a method to construct non-superposition
codes according to the network connections and decoding rate requirements.
Example \ref{ex_3Dnonsuper} presents the construction of a non-superposition
code using a sub-block structure. It is left for future studies to
find out whether this construction method give the optimal code in
more general settings.

\bibliographystyle{IEEEtran}
\bibliography{ref}

\end{document}